\xpatchcmd{\@cref}{\begingroup}{\begingroup\bfseries}{}{}
\algrenewcommand\ALG@beginalgorithmic{\footnotesize}
\newcommand{\given}{\, | \,}
\newcommand{\with}{ ; \,}
\newcommand{\suchthat}{\, : \,}
\newcommand{\f}{\varphi}
\renewcommand{\vec}[1]{\boldsymbol{#1}}
\newcommand{\domain}{\Omega}%
\newcommand{\dd}{\mathrm{d}}
\newcommand{\rbb}{\mathbb{R}}
\newcommand{\zbb}{\mathbb{Z}}
\newcommand{\nbb}{\mathbb{N}}
\newcommand{\EE}{\mathbb{E}}
\newcommand{\params}{\vec{\theta}}
\newcommand{\domainparams}{\vec{\vartheta}}
\newcommand{\data}{\vec{x}}
\newcommand{\xvec}{\vec{x}}
\newcommand{\Lcal}{\mathcal{L}}
\newcommand{\tauexit}{\tau_{\text{exit}}}
\newcommand{\tauswitch}{\tau_{\text{switch}}}
\newcommand{\ksource}{\kappa_{\mathrm{+}}}
\newcommand{\kswitch}{\kappa_{\mathrm{-}}}
\newcommand{\borel}{\mathcal{B}}
\newcommand{\ep}{\epsilon}
\newcommand{\figfont}[1]{\textsf{\textbf{#1}}}
\newtheorem{theorem}{Theorem}[section]
\newtheorem{proposition}[theorem]{Proposition}
\newtheorem{definition}[theorem]{Definition}
\newtheorem{remark}[theorem]{Remark}
\title{Inferring stochastic rates from heterogeneous snapshots of particle positions}
\author[1]{\orcidlink{0000-0001-5494-403X} Christopher E. Miles}
\author[2]{\orcidlink{0000-0001-9434-9163} Scott A. McKinley}
\author[3]{\orcidlink{0000-0003-0118-5441} Fangyuan Ding}
\author[4]{Richard B. Lehoucq}
\affil[1]{Department of Mathematics, University of California, Irvine, \url{chris.miles@uci.edu}.}
\affil[2]{Department of Mathematics, Tulane University, \url{scott.mckinley@tulane.edu}.}
\affil[3]{Department of Biomedical Engineering, University of California, Irvine. }
\affil[4]{Discrete Math and Optimization, Sandia National Laboratories}
\date{November 3, 2023}
\begin{document}

\maketitle

\begin{abstract}
\noindent Many imaging techniques for biological systems -- like fixation of cells coupled with fluorescence microscopy -- provide sharp spatial resolution in reporting locations of individuals at a single moment in time but also destroy the dynamics they intend to capture. These \emph{snapshot observations} contain no information about individual trajectories, but still encode information about movement and demographic dynamics, especially when combined with a well-motivated biophysical model. The relationship between spatially evolving populations and single-moment representations of their collective locations is well-established with partial differential equations (PDEs) and their inverse problems. However, experimental data is commonly a set of locations whose number is insufficient to approximate a continuous-in-space PDE solution. Here, motivated by popular subcellular imaging data of gene expression, we embrace the stochastic nature of the data and investigate the mathematical foundations of parametrically inferring {demographic} rates from snapshots of particles undergoing birth, diffusion, and death in a nuclear or cellular domain. Toward inference, we rigorously derive a connection between individual particle paths and their presentation as a Poisson spatial process. Using this framework, we investigate the properties of the resulting inverse problem and study factors that affect quality of inference. One pervasive feature of this experimental regime is the presence of cell-to-cell heterogeneity. Rather than being a hindrance, we show that cell-to-cell geometric heterogeneity can \textit{increase} the quality of inference on dynamics for certain parameter regimes. Altogether, the results serve as a basis for more detailed investigations of subcellular spatial patterns of RNA molecules and other stochastically evolving populations that can only be observed for single instants in their time evolution.%
\end{abstract}

\linespread{0.6}         %
\tableofcontents
\linespread{1.02}         %

\section{Introduction}
\subsection{Background and Motivation}

Advances in microscopy and automated tracking have matured to an age of observing of the spatial evolution of individuals at scales ranging from tissue-scale collectives of cells \cite{cognet2014advances} down to single molecules \cite{moerner2007new,mclaughlin2020spatial}.  Such spatial tracking has led to discoveries in the understanding of viral transmission \cite{chen2014transient}, intracellular transport \cite{Osunbayo2019,rayens2023palmitate}, and many other areas in cell and molecular biology \cite{manzo2015review}. In these investigations, the dynamics of a population are probed through the tracked motion of individual trajectories.  The theory and practice of analyzing these single-particle trajectories is rich and well-developed  \cite{qian1991single,mellnik2016maximum,rehfeldt2023random}. In contrast, many imaging approaches require the fixation of cells (which requires killing the tissue and cryo-preservation to stabilize agent locations) and therefore provide only a snapshot of the population at one moment of time \cite{biswas2021imaging}. Without direct access to the evolution of creation, destruction, and motion of this population prior to imaging, what can be inferred about these underlying dynamics? 

One very active area of inferring dynamics from snapshots is in the quantification of gene expression from \textit{spatial transcriptomics} imaging of individual RNA molecules \cite{weinreb2018fundamental,marx2021method}. Although time-lapse imaging and inference of these systems is possible in certain circumstances \cite{bowlesScalableInferenceTranscriptional2022}, inferring the dynamics underlying populations from RNA snapshots is a far wider and mainstream interest. Such spatial transcriptomics have served to be invaluable in recent identifications of cell types \cite{moses_museum_2022} and disease mechanisms \cite{williams_introduction_2022}. However, the theory to analyze this data has lagged behind the sophisticated techniques used to harvest it. Spatial imaging techniques (such as smFISH, single-molecule FISH \cite{shaffer2013turbo,omerzuThreedimensionalAnalysisSingle2019,Ding2019}) are capable of resolving individual RNA molecules, multiplexed over several genes \cite{maynardDotdotdotAutomatedApproach2020} and many cells\cite{cai_stochastic_2006,eng_transcriptome-scale_2019}.  This unprecedented single-molecule resolution is largely neglected, with most analysis approaches focusing on RNA counts, binned either per-cell or split into nuclear and cytoplasmic counts \cite{la2018rna,svensson2018rna,gorin2022rna,fu2022quantifying}. In doing so, the subcellular spatial factors that crucially control gene expression \cite{heinrich_temporal_2017,cassella_subcellular_2022}, e.g., geometry-dependent nuclear export \cite{kohler2007exporting,rodriguez2011linking}, are largely understudied and unincorporated into transcriptomics analyses.

Beyond the motivation for quantification of  subcellular patterns of individual RNA molecules, we also emphasize the goals and value of \textit{mechanistic} modeling. A zoo of sophisticated techniques has arisen for the analysis of (spatial) transcriptomics data, but these largely phenomenological or statistical studies \cite{imbertFISHquantV2Scalable2022,mah2022bento} suffer a lack of methodological {reproducibility} and interpretability \cite{gorin2022interpretable,chari2023specious}. The proposed solution to these shortcomings is the use of mechanistic, stochastic models \cite{lammers2020matter,gorin2022interpretable} as the basis of inferring gene expression dynamics. Such methods have grown in popularity and sophistication in recent years \cite{munsky2015integrating,gomez2017bayfish,herbach2017inferring,,gorin2022modeling,luo2022bisc,bowlesScalableInferenceTranscriptional2022,guptaInferringGeneRegulation2022,kilic2023gene,vo_analysis_2023}, but remain primarily focused on non-spatial models. Thus, we have outlined the motivation for subceullar, mechanistic modeling of RNA molecules as a natural next step in expanding the frontier of spatial transcriptomics quantifications. We anticipate such techniques will be useful in advancing understanding of subcellular, spatial factors that modulate gene expression, including post-transcriptional splicing \cite{Ding2019,Cote2020Biorxiv,ding2022dynamics} and nuclear export \cite{kohler2007exporting,rodriguez2011linking}.  %

The basis of this work is a mathematical connection between spatial point processes and snapshot observations of stochastic birth-death-diffusion processes. Most crucially, this connection provides statistical information of stochastic particle positions and counts, while circumventing the need for costly spatial stochastic simulation \cite{lemerle2005space}. %
The connection was suggested by Gardiner in the 1970s \cite{gardiner1977poisson,chaturvedi1978poisson,chaturvedi1977stochastic} and recently reintroduced non-rigorously in \cite{schnoerr2016cox} in broad generality of stochastic reaction-diffusion processes. %
Here, we provide rigorous particle-perspective justification for this connection in the setup of birth, death, and diffusion in a domain. The machinery developed to prove this reveals new insight into generalizations and limitations of the theory.  Inference on spatial point processes is a well-established topic in the statistics literature \cite{baddeley_likelihoods_2001}, but focuses largely on phenomenological or purely statistical modeling of the underlying intensities, including in a recent application to the same data considered as our work \cite{walter2023fishfactor}. In contrast, the resulting setup here is an intensity derived from the solution of a birth-diffusion-death PDE. The model problem is similar in structure to classical PDE-based "inverse problems" \cite{aster2018parameter,flegg2020parameter} but the crucial difference is that the data presents as collections of stochastic particle locations rather than the PDE solution itself plus observational noise. Here we embrace the stochastic particle-by-particle nature of the data and study the system using spatial point process inference techniques, which are naturally informed by the solutions of appropriate PDEs.%

\subsection{Overview of work}

\begin{figure}
    \centering
    \includegraphics[width=\textwidth]{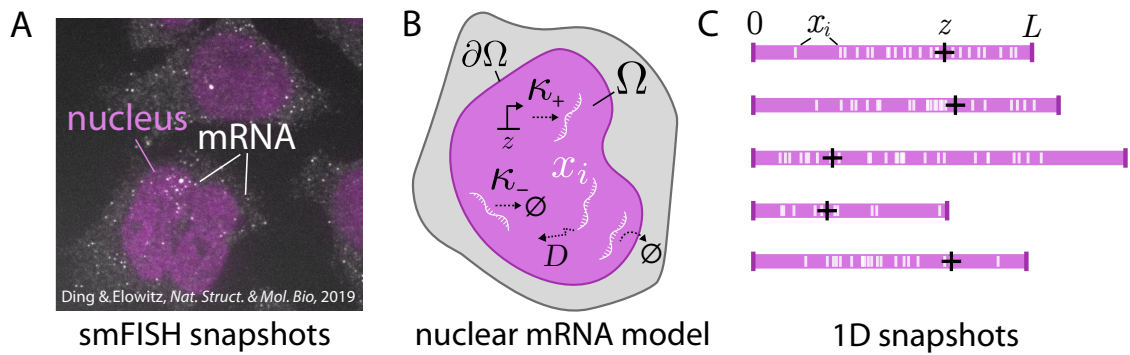}
    \caption{\textbf{Motivation and setup of the work.} \figfont{A}: Experimental FISH data from \cite{Ding2019} demonstrating the basis for modeling within this work. Individual mRNAs are resolved in both the nucleus and cytoplasm. \figfont{B}: Schematic of the mRNA model in nuclear domain $\Omega$: birth at rate $\ksource$ at gene site $z$, decay at rate $\kswitch$, and diffusion with diffusivity $D$, and instantaneous nuclear export at the boundary $\partial \Omega$ (an absorbing boundary). \figfont{C}: 1D simulations of the model demonstrating the heterogeneity in nuclear (or cellular) domain size $L$ and gene site $z$ (black plus).   }
    \label{fig:datafig}
\end{figure}

In this work, we investigate a "toy" model motivated by smFISH imaging data of subcellular RNA \cite{Ding2019} that distills salient features of the intrinsically stochastic \cite{raj2006stochastic,raj2008nature}, spatial \cite{heinrich_temporal_2017, cassella_subcellular_2022}, and heterogeneous nature of the data. In such imaging, such as the example shown in \cref{fig:datafig}\figfont{A}, individual mRNA can be resolved in both cytoplasm and the nucleus of cells. Panel \figfont{B} shows the model considered throughout the remainder of the text for nuclear mRNA in a domain $\domain$, with stochastic production at rate $\ksource$ at a gene site $z$, decay at rate $\kswitch$, and diffusion with diffusivity $D$. The nuclear boundary is assumed to be instantaneous export at $\partial \domain$. To faithfully reflect the cell-to-cell heterogeneity that crucially affects the observed data, we incorporate different nuclear (or cellular) domain sizes and gene site positions, shown through a simplified 1D setup in \figfont{C}. The remainder of this work explores the mathematical nature of the inverse problem on this collection of heterogeneous snapshots. 

In \cref{sec:methods} we outline the model setup of snapshot data as an observation of a distribution of particles at time zero in a system in which the initial condition has been pulled back in time to negative infinity.  We show that the snapshot data can be described as a nonhomogeneous spatial Poisson point process with an intensity measure that is generally recognized as the  \emph{occupation measure} of diffusing particles before exiting the domain or being removed by state-switching \cite{durrett1996stochastic}. The full proof is provided in \cref{sec:support:thm} with a  nondimensionalization argument found in \cref{sec:support:nondim}. This connection is then translated into a likelihood function that can then be used for inference of parameters. 

In \cref{sect:1d_model}, we put the theory to practice in the context of our toy model. In this setting, the likelihood function has a fully explicit form, which we report and visualize as a function of the number of cell snapshots that are included in a given experiment. This raises questions concerning the impact on inference from geometric heterogeneity among the cells %
and, in \cref{subsect:like_and_info}, we calculate the (Fisher) information for experiments run under different heterogeneity assumptions. In \cref{subsect:inference}, we explore how these theoretical results translate into uncertainty in the inference of model parameters. We report on numerical experiments in which we generated synthetic snapshot data from simulations of underlying particle processes, and employed Bayesian methods to quantify joint uncertainty in the inference of particle production and degradation rates. In \cref{subsect:hetero} we report on similarly constructed numerical experiments based on several different cell-heterogeneity assumptions. We find that the statistical information analysis conducted in \cref{subsect:inference} does faithfully predict Bayesian uncertainty quantifications that result from simulated data. This encourages continued use of these methods for both experimental design and practical statistical inference. %
Finally, in \cref{sec:supp:infobin} we use the information-theoretic framework to demonstrate the benefit of using explicit particle locations for inference when compared to spatial binning. Additional numerical implementation details regarding stochastic process simulations and MCMC sampling for Bayesian inference are outlined in \cref{sect:numerics}.

\section{Mathematical and Statistical Methods}
\label{sec:methods}

\subsection{Notation}

Throughout this study, we have it in mind that biological cells can be heterogeneous in their geometry, but have similar internal reaction rates and transport dynamics. Let $M \in \mathbb{N}$ be the number of cells in a study, and for each $m \in \{1, 2, \ldots, M\}$, let the compact set $\domain_m \subset \rbb^d$ denote the domain of the $m$th cell. Within each cell there are diffusive particles that emerge at rate $\ksource$ at locations that are distributed according to a source probability distribution $\f_m : \Omega_m \to \rbb_+$. They then move according to the laws of Brownian motion with uniform diffusivity $D$, possibly with drift $\alpha : \domain \to \rbb^d$. If a particle in the $m$th cell exits the domain $\domain_m$, we assume that it cannot return. Furthermore, at rate $\kswitch \geq 0$, particles undergo a change of state (either they degrade, or engage in a chemical reaction, for example). We will call a particle \emph{active} up until the time it either switches state or exits the domain, and \emph{inactive} thereafter.

In each cell, at time $t = 0$ we observe the locations of the active particles, denoted $\vec{x}_m = \{x_{1m}, x_{2m} \ldots, x_{n(\vec{x}_m),m}\}$, where $x_{im} \in \domain_m$. We write $n(\vec{x})$ for the number of particles in the collection $\vec{x}$, and emphasize that the number of particles in each of the $M$ distinct cells $\domain_m$ will be independent of each other and random.
In the foregoing discussion, we emphasize that the boundary condition, the diffusivity constant, and the emergence and switch rates are universal among all cells, but the initial location distribution (often realistically in the cell nucleus) can be different in each different cell. 

As we will see later, the three parameters $D$, 
$\ksource$ and $\kswitch$ are not mutually identifiable. We therefore introduce the ratios
\begin{equation} \label{eq:defn-lambda-mu}
    \lambda \coloneqq \frac{\ksource L_0^2}{D}, \text{ and } \mu \coloneqq \frac{\kswitch L_0^2}{D}.
\end{equation}
That is, we are nondimensionalizing with respect to the particle diffusivity and the typical cell size $L_0$ over the population of observations. 

\subsection{Stochastic model for snapshot observations within a single cell}
\label{methods:theorem}

We now formally introduce the stochastic model that provides the basis for our inference protocol. For the remainder of this section, we will suppress dependence on the cell index $m$ and describe the dynamics within a single focal cell.

In this work, we assume that the emergence (or birth) times of particles are given by a stationary Poisson point process on the negative half of the real line. 
Let $0 > T_1 > T_2 > \dots$ be an enumeration of points in the negative half-line, starting with the point closest to zero and decreasing from there. The initial locations of particles will be given by the sequence of independent and identically distributed (iid) random variables $\xi_i \overset{iid}{\sim} \f$. Let $\{\mathring{B}_i(t)\}_{t \geq 0}$ be a sequence of iid standard $d$-dimensional Brownian motions (with $\mathring{B}_i(0) = 0$), and let $B_i(t) \coloneqq \mathring{B}_i(t-T_i)$ be the Brownian motions shifted so that their initial time is the associated particle emergence time $T_i$. 

We use the term ``time'' in the preceding description, but as we show in \cref{sec:support:nondim},%
 we can assume that the dynamics have already been rescaled to nondimensional coordinates that result in a process that has diffusivity one, switch rate $\mu$ and birth rate $\lambda$. In this nondimensional framework, let a drift vector $\alpha \colon \rbb^d \to \rbb^d$ be given, and define $\{X_i(t)\}_{t > T_i}$ to be the solution of the stochastic differential equation (SDE)
\begin{equation} \label{eq:defn-X}
\begin{aligned}
    \dd X_i(t) &= \alpha(X_i(t)) \dd t + \sqrt{2} \, \dd B(t), \quad t > T_i; \\
    X_i(T_i) &= \xi_i.
\end{aligned}
\end{equation}
The partial differential operator associated with this diffusion is defined in terms of its action on twice-differentiable functions $f \colon \rbb^d \to \rbb$, 
\begin{equation} \label{eq:defn-L}
    \Lcal_X \coloneqq \alpha(x) \cdot \nabla f(x) + \Delta f(x)
\end{equation} where $\Delta$ is the Laplacian. 

Associated with each path is a continuous-time Markov chain (CTMC) $\{J_i(t)\}_{t > T_i}$ that records the state of the particle at time $t$. We assume that all particles start in the \emph{active} state $J_i(0) = 1$. At rate $\mu$, the particles switch to the inactive state 0. We label the switch times $\tau_{\mathrm{switch},i}$ and, due to the Markov assumption, the switch times are exponentially distributed with densities $p_i(t) = \mu e^{-\mu (t - T_i)} 1_{[T_i, \infty)} (t)$. Because particles are assumed to not be able to return to the interior of a cell once they exit, we define the stopping times $\tau_{\mathrm{exit},i} \coloneqq \inf\{t \geq T_i \suchthat X_i(t) \notin \domain\}$.

\begin{figure}[htb]
    \centering
        \includegraphics[width=\textwidth]{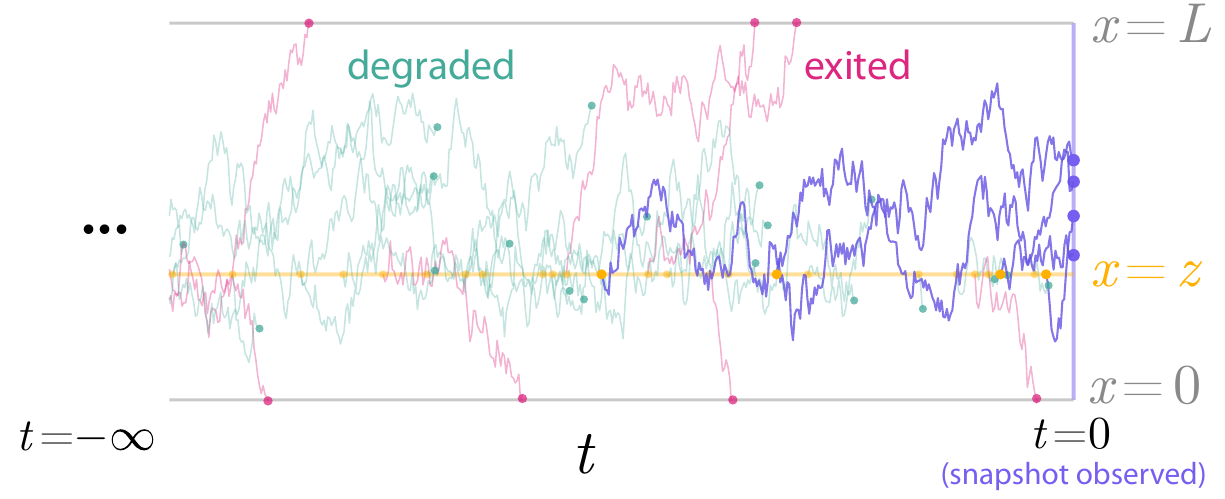}
    \caption{\textbf{Visualization of the stochastic model.} Particle populations observed at the snapshot instant ($t = 0$) have long histories that cannot be observed. Here, particles are ``born'' at a rate $\lambda$ (yellow circles) in the source location $z$ (yellow line). The particles diffuse until  they are either degraded (green circles) or exit the domain (pink circles). The faded trajectories are for particles that left the system before time zero, while the purple trajectories are those that remain active and are observed in the $t=0$ snapshot. Nondimensional $x,t$ shown with rates $\lambda=10$, $\mu=1.5$ corresponding original simulation parameters $\ksource=5, \kswitch=0.75, D=0.5, L=3,$ $L_0=1, z=1$.    }\label{fig:snapshot-roots}
\end{figure}

Our model for the ``snapshot'' data is to define a spatial point process $\vec{N} \colon \borel(\domain) \to \zbb$ that counts the number of \emph{active} particles in subsets of the domain at time 0. To be precise, for every Borel subset $A \in \borel(\domain)$, we define
\begin{equation} \label{eq:defn-N}
    N(A) \coloneqq \sum_{i = 1}^\infty \delta_{A} \big((X_i(0 \wedge \tau_{\mathrm{exit},i})\big) 1_{\{J(0 \wedge \tau_{\mathrm{exit},i}) = 1\}},
\end{equation}
where $\delta_A(x) = \int_A \delta(x-y) \dd y$ and $t \wedge s \coloneqq \min(t,s)$. In other words, $N(A)$ counts all particles that have neither exited the system through the boundary $\partial \domain$, nor switched from active to inactive, and are located in the set $A$ at time $t=0$. A schematic of this can observation model can be seen in \cref{fig:snapshot-roots}.

We can now state the main mathematical result, which identifies the relationship between collections of particle locations in steady state and the solution of a well-known boundary value problem (BVP).

\begin{theorem} \label{thm:main}
Let ${(X_i(t), J(t))}_{i \in \nbb}$ be the particle/state pairs defined by \eqref{eq:defn-X} and the foregoing discussion. Let $\domain \subset \rbb^d$ be an open domain with compact closure that has a boundary with all regular points with respect to the SDE \eqref{eq:defn-X}. Then the associated spatial point process $\vec{N}$, defined by \eqref{eq:defn-N}, giving the locations of \emph{active} particles at time zero, is a Poisson spatial process with intensity measure $u$ satisfying the BVP 
\begin{equation} \label{eq:defn-u}
\begin{aligned}
    \Lcal^* u(x) - \mu u(x) &= -\lambda \f(x), & x \in \domain, \\
    u(x) &= 0,  & x \in \partial \domain.
\end{aligned}
\end{equation}
where $\Lcal^* u = -\nabla \cdot (\alpha(x) u(x)) + \Delta u(x)$ is the adjoint of the operator \eqref{eq:defn-L}.
\end{theorem}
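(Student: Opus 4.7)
The plan is to recognize the population of active particles observed at time zero as the image of a marked Poisson point process under a deterministic mapping, and then to identify the intensity of the resulting image via a Laplace-transform argument applied to the killed forward Kolmogorov equation. More precisely, the birth times $T_1 > T_2 > \cdots$ form a stationary Poisson point process on $(-\infty, 0]$ with intensity $\lambda$. Enriching each atom with the independent marks $(\xi_i, \mathring{B}_i, \tau_{\mathrm{switch},i})$, the marking theorem yields a Poisson process on $(-\infty,0] \times \domain \times C([0,\infty);\rbb^d) \times \rbb_+$ with product intensity $\lambda\, \dd t \otimes \f(\xi)\, \dd \xi \otimes \mathsf{W} \otimes \mu e^{-\mu s}\, \dd s$, where $\mathsf{W}$ denotes Wiener measure. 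The map sending each such atom to $X_i(0) \in \domain$ whenever $\tau_{\mathrm{exit},i} > -T_i$ and $\tau_{\mathrm{switch},i} > -T_i$, and to a graveyard state otherwise, is measurable, so the mapping (displacement) theorem implies that $N$ is a Poisson point process on $\domain$.

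To compute its intensity density $u$, I would integrate out the marks. Writing $s = -T_i$ for the age of a particle at observation and using independence of the switch clock from the diffusion path,
\begin{equation*}
u(x) = \lambda \int_0^\infty e^{-\mu s}\, v(x, s)\, \dd s,
\qquad v(x,s) \coloneqq \int_\domain \f(\xi)\, p^\domain(\xi, x, s)\, \dd \xi,
\end{equation*}
where $p^\domain$ is the transition density of \eqref{eq:defn-X} killed on $\partial \domain$. Standard killed-diffusion theory gives that $v$ solves the forward problem $\partial_s v = \Lcal^* v$ in $\domain$ with $v(\cdot, 0) = \f$ and $v \equiv 0$ on $\partial \domain$; the hypothesis that every boundary point is regular is exactly what guarantees that the Dirichlet condition is attained classically.

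Applying $\Lcal^*$ to $u$, passing the operator through the $s$-integral (justified by parabolic regularity of the killed semigroup, or alternatively by testing against smooth compactly supported functions), and then integrating by parts in $s$ against $e^{-\mu s}$,
\begin{equation*}
\Lcal^* u(x) = \lambda \int_0^\infty e^{-\mu s}\, \partial_s v(x,s)\, \dd s = -\lambda \f(x) + \mu\, u(x),
\end{equation*}
using $v(x,0) = \f(x)$ and $e^{-\mu s} v(x,s) \to 0$ as $s \to \infty$. This rearranges to the BVP \eqref{eq:defn-u}, and the boundary condition $u \equiv 0$ on $\partial \domain$ transfers directly from that of $v$.

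The main obstacle is deploying the marking and mapping theorems on a Poisson process with infinitely many atoms on $(-\infty,0]$: one must show that the image intensity is locally (in fact globally) finite, i.e.\ $u(\domain) < \infty$. When $\mu > 0$ the exponential factor alone supplies integrability, but in the pure-exit case $\mu = 0$ one must instead rely on boundedness of $\domain$ together with exponential decay of the killed semigroup, which is where boundary regularity is invoked a second time (via the spectral gap of the Dirichlet Laplacian plus drift). A secondary technical burden is justifying the interchange of $\Lcal^*$ with the time integral; this is handled either through interior $C^{2,\alpha}$ regularity of $p^\domain$ on compact subdomains of $\domain$, or more cleanly by first establishing \eqref{eq:defn-u} in weak form and then upgrading to the classical statement under the stated regularity assumptions.
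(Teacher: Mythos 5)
Your proposal is correct and arrives at the same destination as the paper, but the technical route differs in two places worth noting. For the Poisson property, the paper works on a finite window $(-h,0)$: conditional on the number of births the birth times are iid uniform, so Poisson thinning gives $N(A)\sim\mathrm{Pois}(\lambda h\,\rho_A^{(h)})$ with $\rho_A^{(h)}=\frac1h\int_0^h\rho_A(t)\,\dd t$, and the limit $h\to\infty$ yields the mean $\lambda\int_0^\infty\rho_A(t)\,\dd t$. You instead invoke the marking and mapping theorems on the full half-line process; this is arguably cleaner, since it delivers the joint independence over disjoint sets as part of the point-process structure rather than leaving it implicit in the thinning step, at the cost of having to verify local finiteness of the image intensity (which you correctly flag, and which the paper's argument handles by noting that particles exit in finite time almost surely, so $\rho(x,t)\to 0$). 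For the intensity, the paper keeps the killing term inside the forward equation, $\partial_t\rho=\Lcal^*\rho-\mu\rho$ with $\rho=0$ on $\partial\domain$, and integrates in $t$ from $0$ to $\infty$ to obtain $\Lcal^*\rho-\mu\rho=-\f$ directly; you factor the state-switching out as $e^{-\mu s}$ times the boundary-killed density via Feynman--Kac and recover the same equation by integration by parts in $s$. The two computations are equivalent, and your version makes the $\mu=0$ integrability issue and the role of boundary regularity more explicit than the paper does.
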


\begin{remark} \label{rem:poisson}
This is to say, $\vec{N}$ satisfies the condition that, given any disjoint sets $A_1, \ldots A_K \subset \borel(\domain)$, the random variables $\{N(A_k)\}_{k = 1}^K$ are independent with respective distributions
\begin{equation}
\label{eq:Npoiss}
    N(A_k) \sim \mathrm{Pois}\Big(\int_{A_k} \!\! u(x) \dd x \Big).
\end{equation}
\end{remark}
The proof of this theorem is the content of \cref{sec:support:thm}. 

\subsection{Framework for inference}

The Poisson spatial point process representation of the particle locations organizes the data into a form that readily allows for statistical inference. Suppose that we partition the domain $\domain$ into a fixed collection of sets. We then ``bin the data,'' meaning we record the particle locations in terms of the number of particles in each bin. By \cref{thm:main}, the number of particles in each bin is Poisson distributed and independent of number of particles in other bins. This yields a likelihood function that can be used for parameter inference, but depends explicitly on the choice of bins. For example, suppose that the sets $(A_1, A_2, \ldots, A_K)$ form a partition of cell domain, meaning that they are mutually disjoint and their union is $\domain$. For a given set of particle locations $\xvec$, define $n_k(\vec{x})$ to be the number of particles observed in the set $A_k$. Then the likelihood function for the parameter vector $\theta = (\lambda, \mu)$ given the binned data is
\begin{equation} \label{eq:likelihood-bins}
\begin{aligned}
    P_\theta\big(N(A_1) = n_1(\xvec), &\ldots, N(A_K) = n_K(\xvec)) \\
    &= \prod_{k = 1}^K e^{-\int_{A_k} \!\! u(x \with \theta) \dd x} \Big(\int_{A_k} \!\! u(x \with \theta) \dd x\Big)^{n_k(\vec{x})} \frac{1}{n_k(\vec{x})!}.
\end{aligned}
\end{equation}
We introduce the notation $u(x\with\theta)$ to emphasize the dependence of the intensity measure on $\theta$ through the BVP \eqref{eq:defn-u}.

Now, if we change the partition of the domain, the associated likelihood function for the new binning of the data will be different, leading to different estimates and uncertainty regions for the parameters. Naturally among all choices for which partition to use, we would prefer to use one that yields the ``best'' statistical inference in some rigorous sense. One standard approach is to consider the likelihood that emerges in a limit in which the partition mesh size (often defined to be the size of the largest bin) is taken to zero. The form of the limiting likelihood is well known for spatial Poisson processes \cite{baddeley_likelihoods_2001, daley2003introduction} and can be expressed directly in terms of the particle locations, rather than a summary from binning.

\begin{definition}[Likelihood function] 
\label{defn:likelihood}
Let $\domain \subset \rbb^d$ satisfy the conditions of \cref{thm:main}. For each $\params = (\lambda, \mu) \in \Theta$, let $u(\cdot \with \params)$ satisfy the BVP \eqref{eq:defn-u} for the given parameter vector. Then for any set of particle locations $\vec{x} = \big\{x_1, x_2, \ldots, x_{n(\vec{x})}\big\} \subset \domain$, we define the likelihood function $L(\cdot \with \vec{x}) \colon \Theta \to \rbb_+$ as follows:
\begin{equation} \label{eq:likelihood}
    L(\params \with \vec{x}) \coloneqq \left( \prod_{i=1}^{n(\vec{x})} u(x_i \with \params) \right) e^{- \int_\domain u(x \with \params) \dd x} .
\end{equation}
\end{definition}

While the use of this form of likelihood function is standard, in \cref{sec:support:likelihood} we include a derivation using the Lebesgue Differentiation Theorem in the context of Bayesian methods that we have not explicitly seen in the literature. Moreover, in \cref{sec:supp:infobin}, we use the notion of statistical information to quantify the claim that the likelihood function defined by \eqref{eq:likelihood} is ``better'' than any likelihood function of the form \eqref{eq:likelihood-bins} that arises from binning the particle locations. 

We note that the above discussion contains a mix of Bayesian and frequentist ideas and we will continue to draw on tools from each school throughout this work. Generally speaking, when we are considering a specific data set and wish to establish point estimates for the parameters and to quantify uncertainty, we will use a Bayesian method.  This is because the notion of a highest-density posterior region (HDPR) allows for clean visualization of 2d ``joint uncertainty'' for the parameters $\lambda$ and $\mu$, which can then be summarized with a single quantity: its area. Meanwhile, when we wish to make comparisons that are not specific to a given data set, we will use the determinant of the information matrix $\det(I)$. Optimization of this determinant, or \textit{D-optimality} \cite{atkinson2007optimum}, is a popular choice in experimental design (including in other studies of gene expression \cite{fox2019finite}) due to its ability to capture correlations between inferred parameters in the (inverse) area of confidence ellipsoids. However, confidence is a frequentist notion, and we use $\det(I)$ as an analytically tractable proxy for the inverse of the area of the joint uncertainty region in Bayesian inference. The relationship between maximizing $\det(I)$ and minimizing expected HDPR is not precise for finite data, in part because the role of the prior must be considered in the Bayesian setting \cite{gabor2015robust,sharp2022parameter}. But, as we show in \cref{subsect:hetero}, $\det(I)$ does make useful predictions in the present setting.

\section{Inferring dynamics of the 1D RNA model}
\label{sect:1d_model}

Using the framework established in previous sections,  we now explore their consequences and implementation in the context of the 1D model approximating the key features seen in subcellular spatial RNA transcriptomics data. The principal result was the connection between the PDE \eqref{eq:defn-u} as the intensity for the corresponding point process. For a single cell, this PDE becomes  
\begin{equation}\label{eq:u_specific}
0 =  \Delta u(x)  +  \lambda \delta(x-z) - \mu u(x)  \,\text{ for } x \in [0,L], \quad u(0) = 0, \, u(L)=0,
\end{equation}
where $(\lambda,\mu)$ are nondimensional \eqref{eq:defn-lambda-mu} dynamic quantities sought to be inferred, and $L$ is relative to the typical length seen in the population of images.  

The solution to \eqref{eq:u_specific} can be explicitly computed:

\begin{equation}
\label{eq:usol}
u(x) =
 \frac{\lambda}{\sqrt{\mu}}  \operatorname{csch}\left(\sqrt{\mu } L\right) \sinh \left(\sqrt{\mu } \min\{x,z\} \right) \sinh \left(\sqrt{\mu } (L-\max\{x,z\} )\right).
\end{equation}
The expected number of particles is then, from \eqref{eq:Npoiss},
\begin{equation} 
\label{eq:eNsol}
\EE[N] = \int_{0}^L u(x) \, \mathrm{d} x = \frac{\lambda}{\mu} \left(1-\text{sech}\left(\frac{\sqrt{\mu } L}{2}\right) \cosh \left(\frac{1}{2} \sqrt{\mu } (L-2 z)\right) \right) 
\end{equation}
The expectation in \eqref{eq:eNsol} shows an intuitive structure of the product of the pure birth-death mean $\lambda/\mu$, and a geometry-dependent term (always strictly less than 1) that scales the former. 

\subsection{Likelihood and information from analytical solution}
\label{subsect:like_and_info}
Equipped with the statistical characterization from \cref{thm:main} and the analytical expression of the intensity for a single snapshot in \eqref{eq:usol}, we can now examine the behavior  and emergent lessons arising from this inference setup.

\begin{figure}[htb]
    \centering
        \includegraphics[width=1\textwidth]{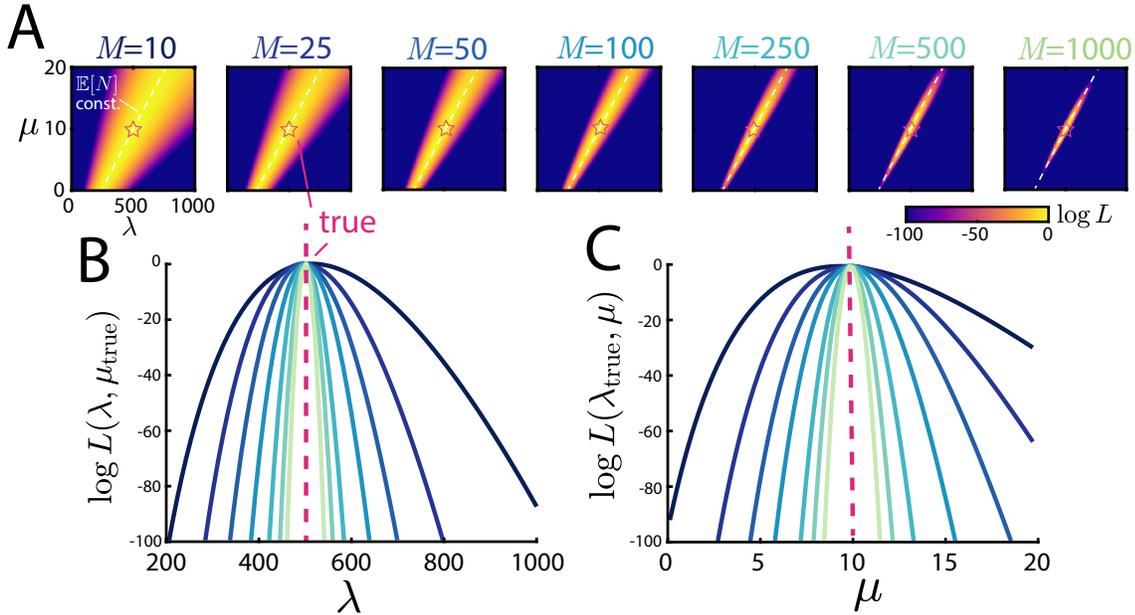}
    \caption{\textbf{Behavior of the likelihood.} \figfont{A}: $\lambda-\mu$ evaluations of the likelihood $L(\params \with \data)$ for increasing numbers of snapshots $M$ with fixed $\lambda=500, \mu =10, L=1, z=0.5$. At low $M$, the likelihood peaks around the line of constant $\EE[N]$, and as $M$ increases, sharply peaks around the true parameter values. \figfont{B/C}: Slices of the likelihood shown in panel \figfont{A} evaluated at $\mu_{\text{true}}$ and $\lambda_{\text{true}}$ respectively. }
    \label{fig:logL}
\end{figure}

We first explore the behavior of the likelihood \eqref{eq:likelihood} as a function of the system parameters and quantity of data. To do so, we perform stochastic simulations of the underlying birth-death-diffusion trajectories and then terminate them at an experimental time $T$. The particle positions at this observation time are the data used for inference. Further details on the stochastic simulation can be found in \cref{sect:numerics}. Beginning with the simplest setup, we fix $\lambda=500, \mu=10$, $L=1$ and $z=1/2$ and explore how increasing the number of snapshots $M$ shapes the likelihood, the results of which can be seen in \cref{fig:logL}\figfont{A}. As $M$ increases, the likelihood first concentrates around a line and then within this line, begins to peak at the true parameter values. To understand this behavior, we plot the $\lambda-\mu$ line of constant $\mathbb{E}[N]$ from \eqref{eq:eNsol} superimposed on the likelihood. From this, we see that the likelihood concentrates around the line defining the true number of particles, and then further resolves within this line from their spatial information. In panels \figfont{B} and \figfont{C} of the same figure, the same information is shown as slices of the likelihood with one parameter at a time held to be the true value. From these panels, we see again that sufficiently large $M$ (here, $M\approx 1000$) yields likelihoods that are tightly concentrated around the true values used to generate the synthetic data. In the next section, we will explore more concretely how the spread of the likelihood translates into uncertainty in the inference itself. 

Before performing inference on the synthetic data, other useful quantities can be computed directly from the likelihood. For a likelihood $L(x;\vec{\theta})$, the corresponding information matrix \cite{vahid2020fisher} is defined by the elements 
\begin{equation}
I_{ij} \coloneqq \EE_{\params} \left[ (\partial_{\theta_i} \log L(\params \with \data))
    (\partial_{\theta_j} \log L(\params \with \data))\right] = -\EE_\theta \left[ \partial_{\theta_i}\partial_{\theta_j} L(\params \with \data)\right],
    \end{equation}
    where the latter equality holds under regularity conditions straightforward to verify for Poisson point processes \cite{clark2022cramer}. Our parameters of interest are  $\vec{\theta}=(\lambda,\mu)$ so the information matrix takes the form
\begin{equation} \label{eq:infomat_2x2}
    I = -\mathbb{E}_\theta \begin{bmatrix} \partial_{\lambda \lambda} L(x;\lambda, \mu)& \partial_{\lambda \mu} L(x;\lambda, \mu) \\ \partial_{\lambda \mu} L(x;\lambda , \mu) & \partial_{\mu \mu} L(x;\lambda, \mu)\end{bmatrix}.
\end{equation}
Importantly, the expectation is over both stochastic particle positions and number. To do so, we employ machinery from the theory of spatial point processes, specifically the Campbell-Hardy formula stated in \cref{thm:campbell}.
\begin{theorem}[Campbell-Hardy, \cite{baddeley_likelihoods_2001}]\label{thm:campbell}
Let $N$ be governed by a general point process on domain $x\in\domain$ with intensity $u(x)$ so that $\EE[N(B)] = \int_B u(x) \dd x$.Let  $f(x)$ be any measurable function. Then, the expectation of the sum over observations of the point process is 
\begin{equation}
  \mathbb{E}\left [  \sum_{i=1}^{N} f(x_i) \right] = \int_{\domain} f(x) u(x) \dd x. \label{eq:campbell}
\end{equation}
\end{theorem}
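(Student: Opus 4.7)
The plan is the standard three-step "build up from indicators" argument that establishes Campbell's formula for any point process whose first moment (intensity) measure exists. Throughout, view the point process as a random counting measure $N = \sum_{i} \delta_{x_i}$ on $\Omega$, so that the quantity to be computed is $\EE\!\left[\int_\Omega f(x)\, N(\dd x)\right]$, and recall that the intensity measure is defined by $\nu(B) \coloneqq \EE[N(B)] = \int_B u(x)\,\dd x$ for every $B \in \borel(\Omega)$.

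The first step is the indicator case: for $f = \mathbf{1}_B$ with $B \in \borel(\Omega)$, one has $\sum_{i=1}^N \mathbf{1}_B(x_i) = N(B)$, and taking expectations gives $\EE[N(B)] = \int_B u(x)\,\dd x = \int_\Omega \mathbf{1}_B(x)\, u(x)\,\dd x$, which is the claim. The second step is to extend to nonnegative simple functions $f = \sum_{k=1}^K c_k \mathbf{1}_{B_k}$ by linearity of expectation:
\begin{equation*}
\EE\!\left[\sum_{i=1}^N \sum_{k=1}^K c_k \mathbf{1}_{B_k}(x_i)\right] = \sum_{k=1}^K c_k\, \EE[N(B_k)] = \int_\Omega f(x)\, u(x)\,\dd x.
\end{equation*}
The third step is to extend to a general nonnegative measurable $f$ by choosing a sequence of simple functions $f_n \uparrow f$ (for instance the usual dyadic truncations). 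Because each term $f_n(x_i)$ is nonnegative and nondecreasing in $n$, the random variables $S_n \coloneqq \sum_{i=1}^N f_n(x_i)$ are nondecreasing in $n$ with $S_n \uparrow \sum_{i=1}^N f(x_i)$, so the monotone convergence theorem lets us pass the limit through the expectation on the left, while the ordinary monotone convergence theorem handles the right-hand integral. Finally, signed measurable $f$ is handled in the usual way via the decomposition $f = f^+ - f^-$, provided $\int_\Omega |f(x)|\, u(x)\,\dd x < \infty$ so that no $\infty - \infty$ arises.

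The step I expect to demand the most care is the monotone convergence step. One has to verify that the interchange of $\EE$ and the (possibly infinite) sum over points of the process is legitimate for a general point process, not merely a Poisson one: the hypothesis that $\EE[N(B)] = \int_B u(x)\,\dd x$ is finite for bounded $B$ is precisely what makes $\nu$ a $\sigma$-finite intensity measure, and together with Fubini-Tonelli applied to the product of the probability space and $(\Omega, \nu)$ this licenses both the interchange and the monotone passage to the limit. Once this foundational interchange is in place, the rest of the argument is purely mechanical.
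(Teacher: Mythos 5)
Your proof is correct: it is the standard measure-theoretic derivation of Campbell's formula (indicators $\to$ simple functions $\to$ monotone convergence $\to$ signed $f$ via $f = f^+ - f^-$), and the Tonelli/monotone-convergence justification you flag as the delicate step is exactly the right thing to worry about for a general point process. For comparison, the paper does not prove this theorem at all --- it is quoted from \cite{baddeley_likelihoods_2001} and used as a black box --- so there is no in-paper argument to diverge from. One small improvement your write-up makes over the paper's statement: the theorem as quoted says ``any measurable function,'' but as you correctly note, for signed $f$ one needs $\int_\domain |f(x)|\, u(x)\, \dd x < \infty$ to avoid an $\infty - \infty$; in the paper's application $f = \partial_{\theta_i\theta_j} \log u$ with $u$ bounded away from zero on the support of the integrand, so this is harmless there, but it is worth stating.
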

Through \cref{thm:campbell}, the elements of our information matrix in \eqref{eq:infomat_2x2} can be related to the PDE solution $u$ by \eqref{eq:usol}. To do so, the log-likelihood $\ell(\params;\data)$ from \eqref{eq:likelihood} is $$
\ell(\params;\data) \coloneqq \log L(\params \with \data) = -\int_\domain u(x;\params) \dd x + \sum_{i=1}^{N} \log u(x_i;\params)$$ and $\partial_{\theta_i \theta_j} \ell = -\int_\domain \partial_{\theta_i \theta_j} u(x;\params) \dd x + \sum_{i=1}^{N} \partial_{\theta_i\theta_j}\log u(x_i;\params) $.
\begin{equation}
    \begin{aligned}
        I_{ij}(\params)  =-\EE[\partial_{\theta_i\theta_j} \ell(\params;\data)] &= -\EE \left [-\int_\domain \partial_{\theta_i \theta_j} u \dd x \right] - \EE\left[\sum_{i=1}^{N} \partial_{\theta_i\theta_j}\log u(x_i;\params) \right]\\ 
        &= \int_\domain \partial_{\theta_i \theta_j}  u(x;\params) \dd x - \EE\left[\sum_{i=1}^{N} \partial_{\theta_i\theta_j}\log u(x_i;\params) \right] \\
        &=  \int_\domain \partial_{\theta_i\theta_j} u(x;\params)  \dd x - \int_\domain (\partial_{\theta_i\theta_j} \log u (x;\params )u(x;\params) \, \dd x. \label{eq:I_ij_intermed}
        \end{aligned}
        \end{equation}
        The first term is constant, and therefore is its own expected value. The second, stochastic sum, takes the form of \eqref{eq:campbell} with $f(x)=\partial_{\theta_i \theta_j} \log u(x)$ and is evaluated with \cref{thm:campbell}. The integrands of \eqref{eq:I_ij_intermed} can be manipulated to a more tractable form by expanding derivatives
        \begin{equation}
            \begin{aligned} 
                I_{ij}(\params) &=  - \int_\domain (\partial_{\theta_i\theta_j} \log u(x;\params)  )u(x;\params) \, \dd x + \int_\domain \partial_{\theta_i\theta_j} u(x;\params)  \, \dd x\\   
        &= \int_\domain (\partial_{\theta_i}\log u) (\partial_{\theta_{j}}\log u(x;\params) ) u(x;\params)  \,  \dd x.\\ 
        &= \int_\domain \frac{(\partial_{\theta_i} u(x;\params) ) (\partial_{\theta_j} u(x;\params) )}{u(x;\params) } \,  \dd x. \label{eq:Info}
    \end{aligned}
\end{equation}
If $\lambda>0$, then $u>0$ and therefore $I_{ij}>0$ so identifiability \cite{WIELAND202160,browning2020identifiability} of both parameters can be concluded.  Moreover, the analytical solution to the PDE allows us to compute the information explicitly. For instance, 
\begin{equation}
I_{11}(\params) = -\EE_\theta\left[ \partial_{\lambda \lambda} \ell(\vec{\theta};\data) \right] =  \left(\lambda  \mu\right)^{-1} \left[ 1-\operatorname{sech}\left(\sqrt{\mu } L/2\right) \cosh \left( \sqrt{\mu } (L-2 z)/2\right) \right]. 
\end{equation}
The remaining entries can be computed straightforwardly using \texttt{Mathematica} but have unwieldy expressions not shown here.

\begin{figure}[htb]
    \centering
    \includegraphics[width=0.8\textwidth]{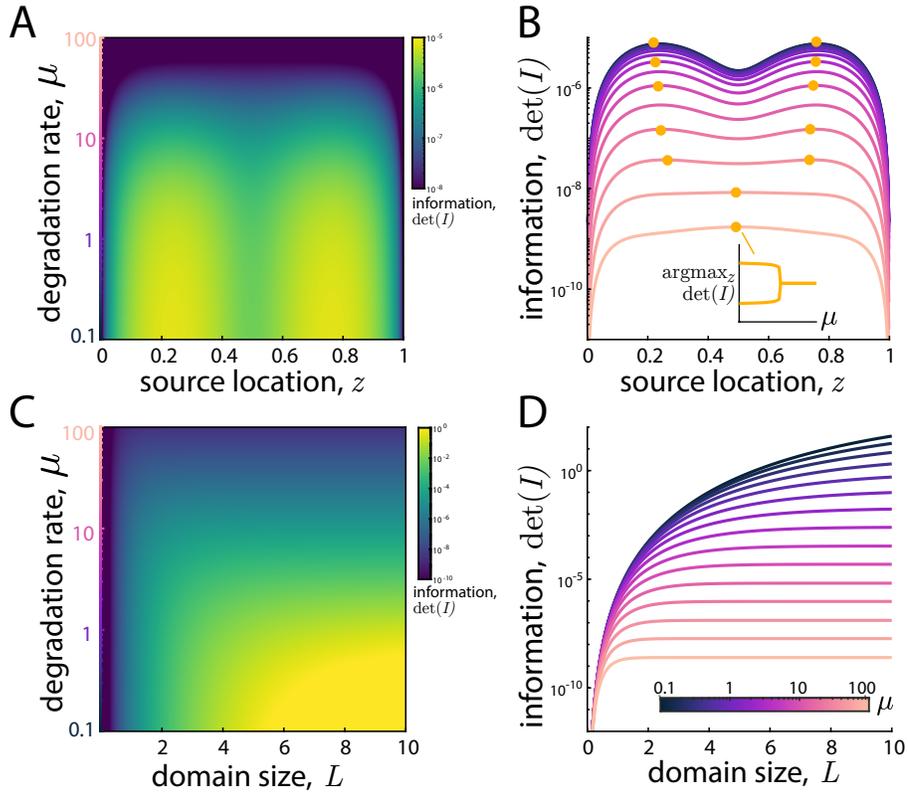}
    \caption{\textbf{Behavior of the information.} \figfont{A}: Determinant of the information $\det I$ in the $z-\mu$ plane with $\lambda=500$, $L=1$ fixed. \figfont{B}: same information as panel \figfont{B}, with $\det I$ as a function of $z$ shown for various values of $\mu$, with the maximizing value of $z$ labeled. \figfont{B inset}: the-information maximizing source location $z$ as a function of $\mu$. \figfont{C}: Determinant of the information $\det I$ in the $L-\mu$ plane with $\lambda=500$, $z=L/2$ fixed. \figfont{D}: same information as panel \figfont{C}, with $\det I$ as a function of $L$ shown for various values of $\mu$.  }
    \label{fig:info}
\end{figure}

In \cref{fig:info}, we employ the analytical expressions for the information matrix to explore the behavior of $\det I$ as a function of the underlying parameters. We interpret this quantity's importance through the classical lens of the Cram\'{e}r-Rao bound \cite{clark2022cramer}: any unbiased estimator of the parameter's variance is bounded below by the inverse of information. Informally, we interpret larger information setups as "easier" inverse problems. We later verify this interpretation through posterior widths on actual inference setups. 

In \cref{fig:info}\figfont{A} and \figfont{B}, we fix $\lambda=500, L=1$ and vary both source location $z$ and degradation rate $\mu$. For large values of $\mu$, the behavior is intuitive: the information is maximized when the source location is at the center, as this minimizes the overall number of particles exiting the boundary and corresponding information loss. For smaller values of $\mu$, a bifurcation in the behavior occurs and the optimal source location becomes non-trivial: occurring at two symmetric locations somewhere between the cell center and boundary. This non-trivial behavior arises only in the simultaneous inference of parameters, suggesting its origin. We interpret this non-trivial behavior as arising from the trade-off between seeing more particles (sources closer to the center) or seeing a wider range of the spatial gradient (for off-center source locations). 

Panels \figfont{C} and \figfont{D} investigate the cell size's impact on the information by fixing $z/L$ and varying $L$ and $\mu$. Intuitively, larger cells (relative to the typical size) yield more information due to less information loss through the absorbing boundary. Interestingly, this effect saturates for sufficiently large values of $\mu$, when the typical lengthscale of a trajectory is far shorter than the domain size, so no additional information is gained through enlarging the domain.

\subsection{Bayesian inference on synthetic data}

\label{subsect:inference}

For its noted strengths in robustness and natural uncertainty quantification that we defer to arguments elsewhere \cite{yau2019bayesian}, we will deploy Bayesian inference. We must therefore specify prior distributions \cite{gelman1995bayesian} for $\lambda$ and $\mu$.  For the production rate $\lambda$, we take
\begin{equation}
   \lambda \sim \mathrm{Gamma}(\alpha,\beta), \qquad p(\lambda) = \frac{\lambda^{\alpha-1}e^{-\beta \lambda} \beta^\alpha}{\Gamma(\alpha)}.
\end{equation}
Assuming the total prior is independent for each parameter, $p(\params) = p(\lambda)p(\mu)$, then for a single observation, we have
\begin{equation}
\begin{aligned}
    p(\params|x) &\propto  p(\params)  L(\params ; \data) 
    = p(\mu) p(\lambda) e^{-\lambda \int_\domain v(x;\mu) \dd x} \prod_{i=1}^{N}\lambda v(x_i;\mu) \\
    &= p(\mu) \prod_{i=1}^{N}v(x_i;\mu) \frac{\lambda^{\alpha-1}e^{-\beta \lambda} \beta^\alpha}{\Gamma(\alpha)} \lambda^n e^{-\lambda \int_\domain v(x;\mu) \dd x}\\
    &= p(\mu) \prod_{i=1}^{N}v(x_i;\mu) \frac{\lambda^{(\alpha+n)-1}e^{-\lambda \left(\beta + \int_\domain v(x;\mu)\dd x\right)}  \beta^\alpha}{\Gamma(\alpha)}. 
\end{aligned}
\end{equation}
Here, we have introduced the definition of $v$ as $u(x;\mu,\lambda)=\lambda v(x;\mu)$, noting that \eqref{eq:usol} is linear in $\lambda$ and following standard practice of identifying a "scale" parameter in the point process literature \cite{baddeley_likelihoods_2001}. This yields the convenient conditional posterior for $\lambda$ arising from conjugacy
\begin{equation}
    p(\lambda| \data, \mu) = \mathrm{Gamma}\Big(\alpha + n(\data), \beta + \int_\domain v(x;\mu) \, \dd x\Big).
\end{equation}
For $M$ observations, each with a different domain, this generalizes to
\begin{equation}
\label{eq:gamma_conj_update}
    p(\lambda| \data_1, \ldots, \data_m, \mu) = \mathrm{Gamma}\left(\alpha + \sum_{m=1}^{M}n(\data_m), \beta + \sum_{m=1}^{M} \int_{\Omega_m} v(x;\mu) \, \dd x\right).
\end{equation}
For $\mu$, we choose a uniform prior $\mu \sim \mathrm{unif}(0,\mu_\mathrm{max})$ with large $\mu_\mathrm{max}$ for uninformedness. With these choices, of priors, it is straightforward to implement a Gibbs-in-Metropolis sampler for the posterior distribution, with a Gibbs step using the conjugacy for $\lambda$ and Metropolis step for $\mu$. Further details on the MCMC implementation (including verification of convergence) can be found in \cref{sect:numerics}. 

\begin{figure}[htb]
    \centering
    \includegraphics[width=0.7\textwidth]{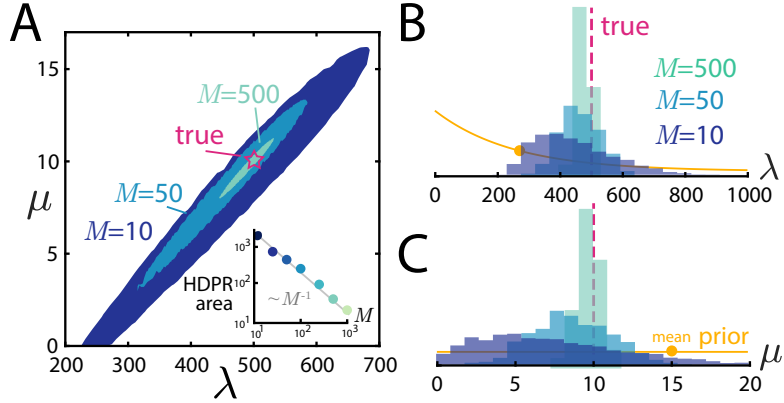}
    \caption{\textbf{Bayesian inference on synthetic data.} \figfont{A}:  Shaded regions show 89\% highest-density posterior region (HDPR) for $M=10, 50, 500$ snapshots and fixed parameters $\lambda=500, \mu=10, z=1/2,L=1$. As $M$ increases, the posterior regions concentrate around the true parameters. \figfont{B,C}: Marginal posterior samples for $\lambda$ and $\mu$ respectively, along with the initial prior distributions for each (with prior means shown with dots on the curves). \figfont{A inset}: area of the HDP region as a function of $M$, showing $\sim M^{-1}$ scaling.}
    \label{fig:inferencefig}
\end{figure}

The results of performing Bayesian inference on synthetic data can be seen in \cref{fig:inferencefig}. The data considered here is the same as that in \cref{fig:logL}, with fixed parameters and varied number of snapshots. In panel \figfont{A}, we depict the 89\% HDPR (highest density posterior region) \cite{gelman1995bayesian}, computed by evaluating $L(\params \with \data)p(\params)$ for each posterior sample $\theta$, retaining the top 89\%, and then computing the convex hull of these samples. The resulting uncertainty qualitatively mimics the lesson seen from the likelihood. For small values of $M$, the posterior concentrates around the {parameter region that produces an expected number of particles equal to the observed count}, and then resolves to be sharply peaked around the true parameters with {data from more snapshots.} Qualitatively, the same result can be seen in the marginal posterior samples seen in \cref{fig:inferencefig}\figfont{B} and \figfont{C}. To more explicitly quantify the dependence of uncertainty on data, we plot the HDPR area as a function of the number of snapshots $M$ in the inset of \cref{fig:inferencefig}\figfont{A}, and find a convincing $\sim M^{-1}$ scaling. This scaling can be interpreted as arising from the product (an area) of the uncertainty of each parameter $\sim M^{-1/2}$ arising from the central limit theorem. 

\subsection{Cell-to-cell heterogeneities and their impact on inference}

\label{subsect:hetero}

In the previous figure, we assumed all cells were identical in shape and source location. However, in experimental setups, there is enormous cell-to-cell heterogeneity. The likelihood results shown in \cref{fig:logL} suggest that this heterogeneity should affect inference quality, as both domain size and source location both modulate the total information. However, those quantities were generated through the computation of the information for a single observation. How do these propagate to a heterogeneous population? To investigate this heterogeneity, we extend the information matrix calculation to now assume that the domain parameters $\domainparams=(z,L)$ follow some probability distribution. The expectation can be extended straightforwardly by the Law of Total Expectation: 
\begin{equation}
  I_{ij}(\params) = -  \mathbb{E}_{\params,\domainparams}[\partial_{\theta_i}\partial_{\theta_j} \log L(x;\params, \domainparams)] = - \mathbb{E}_{\domainparams} \mathbb{E}_{\params}[ \partial_{\theta_i}\partial_{\theta_j} \log L(x;\params| \domainparams)].
\end{equation}
Ultimately, this just reduces to integrating the results of previous figures against the probability distributions for $z$ and $L$. To parameterize the noise, we take $z/L\given L \sim \mathrm{Beta}(\sigma_z^{-1},\sigma_z^{-1})$  and $L \sim \Gamma(\sigma_L^{-2}, \sigma_L^2)$. The choice of distribution for source location was chosen so that $\EE_{\domainparams}[z\given L] = L/2$ and increasing $\sigma_z$ spreads out the source symmetrically around this location, with $\sigma_z=1$ corresponding to spatially uniform locations.  Similarly, $L$ is supported on the non-negative numbers, and the distribution is chosen so that $\EE_{\domainparams}[L]  = 1$ and increasing $\sigma_L$ spreads out the distribution of cell sizes in a plausible way. 
 
\begin{figure}[htb]
    \centering
    \includegraphics[width=0.9\textwidth]{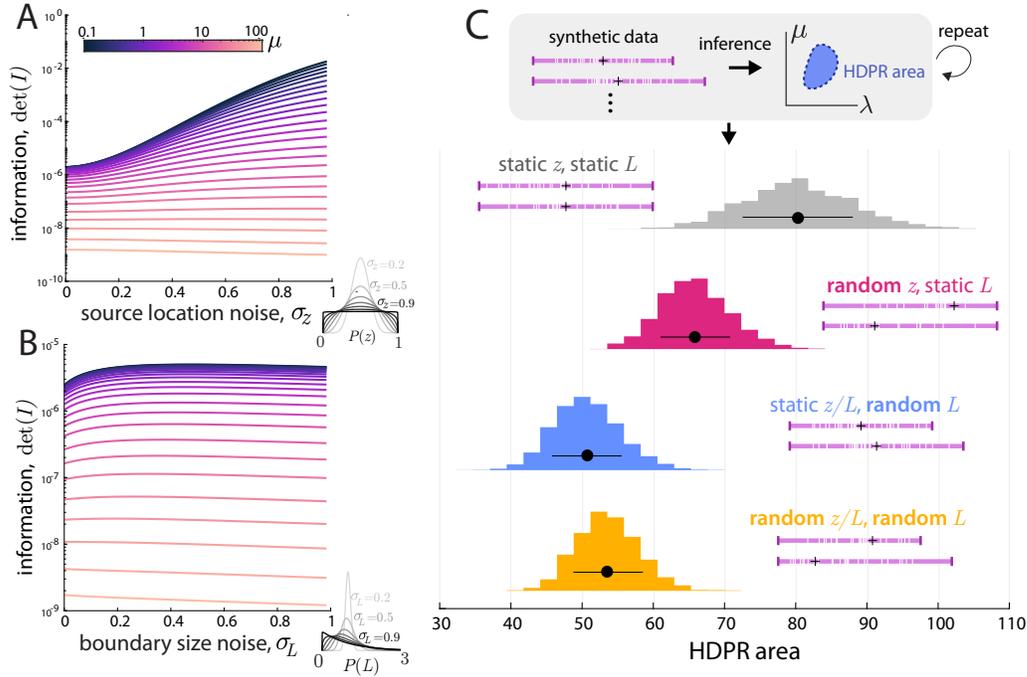}
    \caption{\textbf{Impact of cell-to-cell heterogeneities.} \figfont{A}: Information with $L=1$, $\lambda=500$ fixed and source location varied by a spatially symmetric parameterization with  $\sigma_z \to 0$ is a fixed source and $\sigma_z=1$ is spatially uniform. For low values of $\mu$, spatial heterogeneity \textit{increases} information, whereas it \textit{decreases} the information for large $\mu$. \figfont{B}: Information with $z=L/2$ and varied $L$ parameterized by a distribution such that $\EE_{\domainparams}[L]=1$ and increasing $\sigma_L$ spreads out the distribution. Again, lower $\mu$ scenarios have increased information from spatial heterogeneity. \figfont{C}: Repeated Bayesian inference on synthetic data with $\mu=1$, varied source location $\sigma_z=0.75$, cell size $\sigma_L=0.25$, and both random. All heterogeneous setups have lower uncertainty (smaller HDPR area) than static $z,L$.  }
    \label{fig:random_areas}
\end{figure}

The impact of cell-to-cell heterogeneity on the information can be seen in \cref{fig:random_areas}. Panel \figfont{A} the information $\det I$ as a function of $\sigma_z$ for various values of $\mu$ and fixed $\lambda=500, L=1$. For low values of $\mu$, varying source locations \textit{increase} the amount of information in observations, whereas for large $\mu$, they lower the information. This is in alignment with the results seen in \cref{fig:info}  for fixed source locations, where low values of $\mu$ corresponded to non-trivial optimal source locations, and large $\mu$ have optimal source locations in the center. 

A similar effect can be seen for varying $L$ in panel \figfont{B} of \cref{fig:random_areas}: for low values of $\mu$, varied cell sizes increase the information, whereas for large $\mu$, they decrease it. This result is somewhat more surprising. For increasing values of $\sigma_L$, the median $L$ decreases, meaning that more mass of the distribution is shifted toward cell sizes \textit{smaller} than the typical size. The fixed-size results say that smaller cells encode less information. However, taken together, the population-heterogeneous results suggest that the information from large cells offsets the lack thereof from small, and creates a net gain in information. 

An outstanding curiosity remains of whether these information results are meaningful for actual inference. To investigate this, we repeat the Bayesian inference pipeline from the previous section for various setups of heterogeneities: varied $z$, varied $L$, and both varied. For a fixed experiment of $M=100$ images, synthetic data (with $\mu$ chosen in the regime of increase information) is generated and inference is performed. The HDPR area is computed as a measure of the overall uncertainty in the estimation. This inference experiment is then repeated 1000 times, and the results can be seen in \cref{fig:random_areas} panel \figfont{C}. The results from estimation support the previous: heterogeneity does indeed \textit{increase} the amount of information in the snapshots, and \textit{lowers} uncertainty in estimation. All three scenarios with random source location, cell size, or both random yield smaller HDPR areas than the static scenario.

\section{Support for the inference framework.}

\subsection{Proof of Theorem \ref{thm:main}}
\label{sec:support:thm}

In order to prove that the locations of particles is a Poisson spatial process, we need only show that for any finite collection of disjoint sets $A_1, \ldots, A_k$, the number of particles in the sets are independent and Poisson distributed. If we find a function $u$ such that $\int_A u(x) \dd x = P(A)$ for all $A$, then this the intensity measure.

\smallskip 

\noindent \textit{Poisson distribution.} To fix notation, let $\rho_A(t)$ be the probability that a particle that is distributed according to $\f$ at time 0 and is in the set $A$ at time $t > 0$. Recall, however, that in the model the particles emerge at a rate $\lambda$ according to a Poisson arrival process for some interval of time $(-h,0)$ and we are counting the particles still in the system at time zero. Since this means that the initial times are iid and uniformly distributed over $(-h,0)$, we can define 
\begin{equation}
    \rho^{(h)}_A = \frac{1}{h} \int_{-h}^0 \rho_A(-s) \dd s = \frac{1}{h} \int_0^h \rho_A(t) \dd t.
\end{equation}
By Poisson thinning, if the total number of particles that emerge is $\text{Pois}(\lambda h)$, then the distribution of particles in $A$ at time zero is $\text{Pois}(\lambda h \rho_A^{(h)})$. Taking the limit as $h \to \infty$, we have that the number of particles in $A$ has a Poisson distribution with mean
\begin{equation} \label{eq:poisson-mean}
    \lambda \rho_A = \lambda \int_0^\infty \rho_A(t) \dd t.
\end{equation}

\smallskip 

\noindent \textit{Intensity measure.} The time-dependent PDE that describes the probability density of particles that have not yet been removed by state-switching or by being absorbed at the boundary of a domain is \cite[p.~230]{cox1977theory}
\begin{equation} \label{eq:pde-rho}
\begin{aligned}
    \partial_t \rho(x,t) &= L^* \rho(x,t) - \mu \rho(x,t), &x \in \domain, t > 0, \\
    \rho(x,t) &= 0, & x \in \partial \domain, t > 0, \\
    \rho(x,0) &= \f(x), &x \in \domain.
\end{aligned}
\end{equation}
Defining %
$\rho(x) = \int_0^\infty \rho(x,t) \dd x$, \cite[p.~163-167]{gardiner2004handbook} we integrate \eqref{eq:pde-rho} to find
\begin{equation}
\begin{aligned}
    \big(\lim_{t \to \infty} \rho(x,t)\big) - \rho(x,0) &= L^* \rho(x) - \mu \rho(x), &x \in \domain, \\
    \rho(x) &= 0, & x \in \partial \domain.
\end{aligned}
\end{equation}
Since particles exit the domain in finite time with probability one, $\rho(x,t) \to 0$ as $t \to \infty$. Imposing the initial condition yields
\begin{equation}
\begin{aligned}
    L^* \rho(x) - \mu \rho(x) &= - \f(x), & x \in \domain \\
    \rho(x) &= 0, & x \in \partial \domain.
\end{aligned}
\end{equation}
In order to achieve the desired mean for the Poisson distribution in \eqref{eq:poisson-mean}, we define $u(x) = \lambda \rho(x)$ and note that $u$ can be attained by multiplying the initial distribution function by $\lambda$. This yields the BVP stated in the theorem and the proof is complete.

\hfill $\square$

\subsection{Nondimensionalization}
\label{sec:support:nondim}

In this subsection, we outline the nondimensionalization argument justifying the SDE  \eqref{eq:defn-X}. Suppose that the original SDE satisfies the SDE\begin{equation} \label{eq:defn-X-dim}
    \dd X(t) = \alpha(X(t)) \dd t + \sqrt{2D} \dd W(t)
\end{equation}
with initial condition $X(t_0) = x_0$. The particle can meet one of two fates: a state-switch due to degradation, at rate $\kswitch$, or exit through the boundary of the domain. Define
\begin{displaymath}
    \tilde t \coloneqq \frac{D}{L_0^2} t, \,\, \tilde x \coloneqq \frac{1}{L_0} x, \text{ and } \tilde{\alpha}(\tilde x) := \frac{L_0}{D} \alpha (x).
\end{displaymath}
Then we claim that $\tilde X(\tilde t) \coloneqq \frac{1}{L} X(t)$ satisfies 
\begin{equation} \label{eq:defn-X-tilde}
    \dd \tilde{X}(\tilde{t}) = \tilde{\alpha}(\tilde{X}(t)) \dd \tilde{t} + \sqrt{2} \dd \widetilde{B}(\tilde{t})
\end{equation}
where $\tilde{B}(\tilde{t})$ is a standard Brownian motion. The argument follows from the integral form of  the SDE:
\begin{displaymath}
\begin{aligned}
    \tilde{X}(\tilde{t}) - \tilde{X}(\tilde{t}_0) &= \frac{1}{L_0} \left(X(t) - X(t_0)\right) \\
    &= \int_{t_0}^t \frac{1}{L_0} \alpha(X(s)) \dd s + \sqrt{\frac{2D}{L_0^2}} \left(B(t) - B(t_0)\right).
\end{aligned}    
\end{displaymath}
For the drift term, under the substitution $\tilde s = \frac{D}{L_0^2} s$ we have
\begin{displaymath}
\begin{aligned}
    \int_{t_0}^t \frac{1}{L_0} \alpha(X(s)) \dd s &= \int_{\frac{D t_0}{L_0^2}}^\frac{D t}{ L_0^2} \frac{L_0}{D} \alpha\left(X\Big(\frac{L_0^2}{D} \tilde s\Big)\right) \dd \tilde s \\
    &= \int_{\tilde{t_0}}^{\tilde{t}} \frac{L_0}{D} \alpha\left(L_0 \tilde{X} (\tilde{s})\right) \dd \tilde s \\
    &= \int_{\tilde{t}_0}^{\tilde{t}} \tilde{\alpha}\left(\tilde{X} (\tilde{s})\right) \dd \tilde s
\end{aligned}
\end{displaymath}
Meanwhile, define $\tilde{B}(\tilde t) = \sqrt{\frac{D}{L_0^2}} B(t)$. Then $E\big(\tilde{B}(\tilde{t})\big) = 0$ and 
\begin{equation}
    \textrm{Var}\big(\tilde{B}(\tilde t)\big) = \frac{D}{L_0^2} \textrm{Var}(B(t)) = \frac{D}{L_0^2} t = \tilde t
\end{equation}
meaning that $\tilde{B}$ is a standard Brownian motion.

In terms of the essential stopping times, we define $\mu = L_0^2/D$ (as in \eqref{eq:defn-lambda-mu}), and so $\tilde{\tau}_{\mathrm{switch}} = \tauswitch D/L_0^2$ is the corresponding nondimensional switch time. Moreover, defining $\tilde{\domain} = \{x \suchthat L_0 x \in \domain\}$ and $\tilde{\tau}_\mathrm{exit} := \inf \{\tilde t > \tilde{t}_0 \suchthat \tilde{X}(\tilde t) \notin \tilde{\domain}\}$. We can readily show path-by-path that $\tilde{\tau}_\mathrm{exit} = \tauexit D/ L_0^2 $. Analogous to the dimensional case, we define $\tilde{\tau} \coloneqq \min(\tilde{\tau}_\mathrm{switch}, \tilde{\tau}_\mathrm{switch})$

Altogether, it follows that, for any $A \in \domain$, if we let $\tilde{A} = \{x \suchthat L_0 x \in \domain\}$, then
\begin{equation}
    P(\tilde{X}(0 \wedge \tilde{\tau}) \in \tilde{A}) = P(X(0 \wedge \tau) \in A)).
\end{equation}
Hence, if we wish to solve for the intensity measure of a given dimensional system, we can rescale space by $L_0$ and the rates as in \eqref{eq:defn-lambda-mu} and use the intensity measure of the associated nondimensionalized system for statistical inference.

\subsection{Derivation of the likelihood function}
\label{sec:support:likelihood}

\begin{proposition}
 Let $\{\mathcal{A}_K\}_{K = 1}^\infty$ (with $\mathcal{A}_K = (A_{1,K}, \ldots, A_{K,K})$) be a sequence of $\domain$-partitions  that has the following properties:
\begin{enumerate}[(i)]
    \item the partitions are nested in the sense that if $K_1 < K_2$, then every set in $\mathcal{A}_{K_2}$ is contained within some member of $\mathcal{A}_{K_1}$; and 
    \item the partition mesh sizes, $\big(\max_{k \in \{1, \ldots, K\}}\{|A_{k,K}|\}\big)$, where $|A|$ denotes the volume of a set $A$, decrease to zero with $K$. 
\end{enumerate}

Fix $\vec{x} = \big(x_1, x_2, \ldots, x_{n(\vec{x})}\big)$ to be a single snapshot of particle locations arising from the dynamics defined for \cref{thm:main} and let $\pi(\params)$ be a prior distribution for $\params = (\lambda, \mu)$. Define $\pi_K(\params, \vec{x})$ be the posterior distribution of $\params$ under the prior $\pi$ when the locations are reported through binning in the sets of the partition $A_K$. Then,
\begin{equation}\label{eq:limiting-posterior}
    \pi(\theta \given \vec{x}) \coloneqq \lim_{K \to \infty} \pi_K(\theta \given \vec{x}) = \frac{L(\theta \with \vec{x}) \pi(\params)}{\int_\Theta L(\vartheta \with \vec{x}) \pi(\vartheta) \dd \vartheta}.
\end{equation}
where $L(\params \with \vec{x})$ is the likelihood function given in \cref{defn:likelihood}.
\end{proposition}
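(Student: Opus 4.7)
My plan is to take the binned posterior built from the likelihood \eqref{eq:likelihood-bins}, isolate the factors that do not depend on $\params$, and pass to the limit using the Lebesgue Differentiation Theorem along the nested partition sequence. Throughout I will write $A_K(x_i)$ for the (unique) bin in $\mathcal{A}_K$ containing the particle $x_i$.

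First I would argue that, for the fixed data $\vec{x}$, once $K$ is large enough each of the $n(\vec{x})$ particles lies in its own bin. This uses the mesh-size condition together with the fact that, under \cref{thm:main}, the intensity measure is absolutely continuous with respect to Lebesgue measure, so the particles are almost surely at distinct locations. For all such $K$, the product in \eqref{eq:likelihood-bins} collapses: empty bins contribute only exponential factors, which combine to $\exp(-\int_\domain u(x \with \params) \dd x)$, while each bin containing a single particle contributes $\int_{A_K(x_i)} u(y \with \params) \dd y$ with $n_k! = 1$.

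Next I would apply Bayes' rule and multiply numerator and denominator of the resulting posterior by the $\params$-independent factor $\prod_i |A_K(x_i)|^{-1}$, producing
\begin{equation*}
\pi_K(\params \given \vec{x}) = \frac{\pi(\params)\, e^{-\int_\domain u(x \with \params) \dd x} \prod_i \frac{1}{|A_K(x_i)|}\int_{A_K(x_i)} u(y \with \params) \dd y}{\int_\Theta \pi(\vartheta)\, e^{-\int_\domain u(x \with \vartheta) \dd x} \prod_i \frac{1}{|A_K(x_i)|}\int_{A_K(x_i)} u(y \with \vartheta) \dd y\, \dd \vartheta}.
\end{equation*}
By the Lebesgue Differentiation Theorem applied along the nested, mesh-shrinking partition (equivalently, Doob's martingale convergence theorem for the conditional expectation of $u(\cdot \with \params)$ with respect to the $\sigma$-algebras generated by $\mathcal{A}_K$), the averages $|A_K(x_i)|^{-1}\int_{A_K(x_i)} u(y \with \params)\dd y$ converge to $u(x_i \with \params)$ at every Lebesgue point of $u(\cdot \with \params)$. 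Since $u(\cdot \with \params)$ is a classical solution of the BVP \eqref{eq:defn-u}, it is continuous on $\domain$, so this holds at every observed $x_i$. The numerator therefore converges pointwise in $\params$ to $\pi(\params) L(\params \with \vec{x})$.

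The step that requires genuine care — and is the main obstacle — is exchanging the limit and the integral in the denominator. My plan is to invoke dominated convergence: under the mild standing assumption that $u(\cdot \with \params)$ extends continuously to $\overline{\domain}$ and that $\params \mapsto \sup_{y \in \domain} u(y \with \params)$ is locally bounded, each of the $K$-indexed averages is bounded above by $\sup_{y \in \domain} u(y \with \params)$ uniformly in $K$, and the resulting envelope $\pi(\params)\, e^{-\int u \dd x} \prod_i \sup_{y\in\domain} u(y \with \params)$ is integrable against the prior whenever the posterior is well-defined in the first place. Granting this, dominated convergence delivers the limit of the denominator as $\int_\Theta L(\vartheta \with \vec{x})\pi(\vartheta) \dd \vartheta$, and dividing the limiting numerator and denominator yields \eqref{eq:limiting-posterior}.
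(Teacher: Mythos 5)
Your proposal is correct and follows essentially the same route as the paper's proof: collapse the binned Poisson likelihood once each particle occupies its own bin, rescale numerator and denominator by $\prod_i |A_K(x_i)|^{-1}$, and apply the Lebesgue Differentiation Theorem along the nested, shrinking partition. The one place you go beyond the paper is in explicitly justifying the interchange of limit and integral in the denominator via dominated convergence with a $\sup_{y\in\domain} u(y\with\params)$ envelope --- a step the paper passes over silently --- which is a welcome addition rather than a deviation.
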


\begin{remark} A construction of such a sequence of partitions can be achieved, for example, by starting with any partition of $\domain$ and creating each refinement by randomly selecting one of the partition's sets and splitting in half.
\end{remark}

\begin{proof} We can represent the collection of particle locations as a measure $\delta_{\vec{x}}(x) \coloneqq \sum_{i=1}^{n(\vec{x})} \delta_{x_i}(x)$ \cite{daley2003introduction} and for each member $A_{k,K}$ of a partition $\mathcal{A}_K$, we can define
\begin{equation}
    n_{k,K}(\xvec) \coloneqq \int_{A_{k,K}} \delta_{\vec{x}}(x) \dd x.
\end{equation}
We then can write the posterior distribution 
$\pi_K(\cdot | \vec{x})$ associated with the partition $\mathcal{A}_K$ as follows:
\begin{equation} \label{eq:posterior-explicit}
    \pi_K(\theta \given \vec{x}) = \frac{P_{\theta} \big(N(A_{1,K}) = n_{1,K}(\xvec), \ldots, N(A_{K,K}) = n_{K,K}(\xvec)\big)\pi(\theta)}{\int_{\Theta} P_\vartheta\big(N(A_{1,K}) = n_{1,K}(\xvec), \ldots, N(A_{K,K}) = n_{K,K}(\xvec)) \pi(\vartheta) \dd \vartheta}
\end{equation}
Here the subscript of the probability measure denotes that we are evaluating the probability with the indicated parameter vector. The full parameter space is denoted $\Theta$, which in our case is $\rbb_+ \times \rbb_+$.
Since the probability in the above formula is expressed in terms of an event comprised of independent Poisson random variable results, we have the following explicit formula for the likelihood (suppressing dependence on $K$), which follows from Remark \ref{rem:poisson}: 
\begin{equation} \label{eq:likelihood-A-finite}
\begin{aligned}
    P_\theta\big(N(A_1) = n_1(\xvec), &\ldots, N(A_K) = n_K(\xvec)) \\
    &= \prod_{k = 1}^K e^{-\int_{A_k} \!\! u(x \with \theta) \dd x} \Big(\int_{A_k} \!\! u(x \with \theta) \dd x\Big)^{n_k(\vec{x})} \frac{1}{n_k(\vec{x})!}.
\end{aligned}
\end{equation}

Now, because the partitions are nested, there exists a $K_*$ such that for all $K > K_*$, all partition sets contain at most one particle. If we write $A_{k_i}$ to be the bin containing the particle location $x_i$, we have that $n_{k_i}(\vec{x}) = 1$ for each $i$, and $n_k(\vec{x}) = 0$ otherwise. For the purpose of writing the likelihood function, we gather all bins not containing a particle into one set 
\begin{displaymath}
    A_0 := \domain \setminus \Big\{\bigcup_{i=1}^{n(\vec{x})} A_{k_i}\Big\}.
\end{displaymath}
For such a partition, \eqref{eq:likelihood-A-finite} has the form
\begin{equation} \label{eq:likelihood-A-partition}
\begin{aligned}
    & P_\theta\big(N(A_1) = n_1(\xvec), \ldots, N(A_K) = n_K(\xvec)) \\
    & \qquad \qquad = P_\theta\big(N(A_0) = 0, N(A_{k_1}) = 1, \ldots, N(A_{k_n(\vec{x}}) = 1\big) \\
    & \qquad \qquad = P_\theta\big(N(A_0) = 0\big) \prod_{i = 1}^{n(\vec{x})} P_\theta\big(N(A_{k_i}) = 1\big) \\
    &\qquad \qquad = e^{-\int_{A_0} \!\! u(x \with \theta) \dd x} \prod_{i = 1}^{n(\vec{x})} e^{-\int_{A_{k_i}} \!\! u(x \with \theta) \dd x} \Big(\int_{A_{k_i}} \!\! u(x \with \theta) \dd x\Big) \\
    & \qquad \qquad = e^{-\int_\domain u(x \with \theta) \dd x} \prod_{i = 1}^{n(\vec{x})} \int_{A_{k_i}} \!\! u(x \with \theta) \dd x.
\end{aligned}
\end{equation}
Each term in the final product is going to zero, but by the Lebesgue Differentiation theorem (and observing that for the $i$th particle location, the sequence of $x_i$-containing partition sets $A_{k_i,K}$ is nested), we have that
\begin{displaymath}
    \lim_{K \to \infty} \frac{1}{|A_{k_i,K}|} \int_{A_{k_i}} \!\! u(x \with \theta) \dd x = u(x_i).
\end{displaymath}
It follows that 
\begin{equation}
\begin{aligned}
    \lim_{K \to \infty} \frac{P_\theta\big(N(A_{\ep,1}) = n_1(\xvec), \ldots, N(A_{\ep,K}) = n_K(\xvec)\big)}{\prod_{i=1}^{n(\vec{x})} |A_{k_i,K}|} = e^{-\int_\domain u(x) \dd x} \prod_{i=1}^{n(\vec{x})} u(x_i).
\end{aligned}
\end{equation}
The right-hand side is precisely $L(\theta \with \vec{x})$, the function appearing in the likelihood definition \eqref{eq:likelihood}.
Multiplying the numerator and denominator of \eqref{eq:posterior-explicit} by the rescaling factor $\prod_{i=1}^{n(\vec{x})} |A_{k_i,K}|^{-1}$, we obtain the limit of a sequence of posterior distributions $\pi_K(\cdot \given \vec{x})$, which can be written
\begin{displaymath}
    \pi(\theta \given \vec{x}) \coloneqq \lim_{K \to \infty} \pi_K(\theta \given \vec{x}) = \frac{L(\theta \with \vec{x}) \pi(\params)}{\int_\Theta L(\vartheta \with \vec{x}) \pi(\vartheta) \dd \vartheta}.
\end{displaymath}
This justifies the choice of $L$ appearing in \cref{defn:likelihood}.
\end{proof}

\subsection{Spatial binning encodes less statistical information}
\label{sec:supp:infobin}

The previous section provides a formal derivation of the point process likelihood \eqref{eq:likelihood} by considering increasingly fine-scale partitions of the spatial domain. In this subsection, we motivate this further by showing that the resulting likelihood encodes higher statistical information than any spatial binning of the process, stated as the following proposition. 
\begin{proposition}
Let $u(x;\theta)$ be the intensity of a Poisson point process on a domain $x\in\Omega$ with underlying parameter $\theta$. Let $\{A_j\}_{j=1}^{j=K}$ be any non-overlapping partition of the domain and denote observations of the binned process \begin{equation}
N_j = N(A_j)  \sim \mathrm{Poisson}\left( \int_{A_j} u(x) \dd x \right),  \qquad j=1,\ldots,K
\end{equation}
with corresponding log-likelihood $\ell_\mathrm{bin}(\theta \with \data)$. Let  $I_\mathrm{bin}(\theta) = -\EE_\theta [ \partial_{\theta \theta} \ell_\mathrm{bin}(\theta\with \data)] $ be the information for the binned process and $I_{\mathrm{unbin}}(\theta) = -\mathbb{E}_{\theta}[\partial_{\theta \theta} \log L(\theta \with \data)]$ be the point process information from likelihood \eqref{eq:likelihood}. Then, 
$$
I_\mathrm{bin}(\theta) \leq I_\mathrm{unbin}(\theta).
$$

\end{proposition}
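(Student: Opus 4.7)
The plan is to derive explicit formulas for both $I_{\mathrm{bin}}(\theta)$ and $I_{\mathrm{unbin}}(\theta)$ and then apply the Cauchy--Schwarz inequality on each bin. This is essentially the Fisher-information form of the data-processing inequality: since the binned counts $(N_1,\ldots,N_K)$ are a deterministic (and generally lossy) function of the full point pattern, they can carry no more statistical information.

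First I would compute $I_{\mathrm{bin}}(\theta)$ directly. Writing $\Lambda_j(\theta):=\int_{A_j} u(x;\theta)\,\dd x$, independence of the $N_j$ gives $\ell_{\mathrm{bin}}(\theta\with\data)=\sum_j\bigl(-\Lambda_j(\theta)+N_j\log\Lambda_j(\theta)-\log N_j!\bigr)$, and the standard Fisher-information calculation for a Poisson variate (together with $\EE_\theta[N_j]=\Lambda_j(\theta)$) yields
\begin{equation}
I_{\mathrm{bin}}(\theta)=\sum_{j=1}^{K}\frac{\bigl(\partial_\theta\Lambda_j(\theta)\bigr)^{2}}{\Lambda_j(\theta)}
=\sum_{j=1}^{K}\frac{\bigl(\int_{A_j}\partial_\theta u(x;\theta)\,\dd x\bigr)^{2}}{\int_{A_j}u(x;\theta)\,\dd x}.
\end{equation}
For $I_{\mathrm{unbin}}(\theta)$ I would invoke the calculation already carried out in the paper (see \eqref{eq:Info}), which, specialized to a scalar parameter, reads $I_{\mathrm{unbin}}(\theta)=\int_\Omega (\partial_\theta u)^2/u\,\dd x$.

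The key inequality is then a one-line Cauchy--Schwarz on each bin: writing $\partial_\theta u=(\partial_\theta u/\sqrt{u})\cdot\sqrt{u}$ and integrating against $\mathbf{1}_{A_j}$ gives
\begin{equation}
\Bigl(\int_{A_j}\partial_\theta u\,\dd x\Bigr)^{2}\leq\Bigl(\int_{A_j}\frac{(\partial_\theta u)^{2}}{u}\,\dd x\Bigr)\Bigl(\int_{A_j}u\,\dd x\Bigr),
\end{equation}
so that the $j$th summand in $I_{\mathrm{bin}}(\theta)$ is at most $\int_{A_j}(\partial_\theta u)^{2}/u\,\dd x$. Summing over $j$ recovers $\int_\Omega(\partial_\theta u)^{2}/u\,\dd x=I_{\mathrm{unbin}}(\theta)$, which is the claim. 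Equality holds precisely when $\partial_\theta u(x;\theta)/u(x;\theta)$ is constant on each $A_j$, which quantifies how much information bin refinement can recover and makes precise the heuristic ``finer bins lose less information.''

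If $\theta$ is vector-valued and the inequality is interpreted in the Loewner (positive-semidefinite) order, the same argument works verbatim after contracting with an arbitrary direction $v\in\rbb^{\dim\theta}$: replace $\partial_\theta u$ by the scalar directional derivative $v\cdot\nabla_\theta u$ and apply the Cauchy--Schwarz bound above, giving $v^{\top}I_{\mathrm{bin}}(\theta)v\leq v^{\top}I_{\mathrm{unbin}}(\theta)v$ for every $v$. There is no real obstacle here; the only care needed is mild regularity ($u(\cdot;\theta)>0$ a.e.\ on $\Omega$, and sufficient smoothness in $\theta$ to exchange $\partial_\theta$ with the spatial integral), both of which are satisfied in our setting whenever $\lambda>0$, as the paper already notes below \eqref{eq:Info}.
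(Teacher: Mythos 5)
Your proposal is correct and follows essentially the same route as the paper: compute $I_{\mathrm{bin}}(\theta)=\sum_j\bigl(\int_{A_j}\partial_\theta u\,\dd x\bigr)^2/\int_{A_j}u\,\dd x$, recall $I_{\mathrm{unbin}}(\theta)=\int_\Omega(\partial_\theta u)^2/u\,\dd x$ from \eqref{eq:Info}, and reduce to a per-bin Cauchy--Schwarz inequality (which the paper phrases via the weighted inner product $\langle f,g\rangle_u$). Your equality condition and the extension to vector parameters in the Loewner order are consistent with the paper's closing remark and add nothing that changes the argument.
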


\begin{proof}
The likelihood for the binned process is  this process is \eqref{eq:likelihood-A-finite}, and the log-likelihood is 
\begin{align}\ell_\mathrm{bin}(\theta \with \data)  &\coloneqq \log L_\mathrm{bin}(\theta \with \data) \\ &= -\int_{\domain} u(x \with \theta) \dd x + \sum_j n_j(\data) \log \int_{A_j} u(x \with \theta) \dd x  + \sum_j\log n_j(\data)!.
\end{align}
For a parameter $\theta$, \begin{equation}
\begin{aligned}
I_\mathrm{bin}(\theta) &= -\EE_\theta [ \partial_{\theta \theta} \ell_\mathrm{bin}(\theta\with \data)] \\
&= \int_{\domain} u_{\theta \theta}(x \with \theta) \dd x - \sum_j \EE[n_j]  \partial_{\theta \theta} \log \left ( \int_{A_j} u(x \with \theta) \dd x \right)\\
&= \int_{\domain} u_{\theta \theta}(x \with \theta) \dd x - \sum_j \left(\int_{A_j} u(x \with \theta) \dd x \right) \partial_{\theta \theta} \log \left ( \int_{A_j} u(x \with \theta) \dd x \right), \\
&= \sum_{j} \frac{\left( \int_{A_j} u_\theta(x \with \theta) \dd x \right)^2}{\int_{A_j} u(x \with \theta) \dd x}.
\end{aligned}
\end{equation}
where we have used $\partial_{\theta \theta} n_j =0 $ and $\EE_\theta[n_j] = \int_{A_j} u (x \with \theta) \dd x$. We can compare this directly to the unbinned information obtained via Campbell's formula \eqref{eq:Info}, restated as
\begin{equation}
I_{\mathrm{unbin}}(\theta) = \int_{\domain} u_{\theta \theta} (x \with \theta) \dd x -  \int_{\domain} u \partial_{\theta \theta} (x \with \theta) \log u (x \with \theta) \dd x  = \int_{\domain} \frac{u_\theta(x \with \theta )^2}{u(x \with \theta)}\dd x.
\end{equation}
To prove $I_{\mathrm{bin}} \leq I_{\mathrm{unbin}}$, it must be shown that
\begin{equation}
\sum_{j} \frac{\left( \int_{A_j} u_\theta (x \with \theta) \dd x \right)^2}{\int_{A_j} u(x \with \theta) \dd x} \leq \int_{\domain} \frac{u_\theta(x \with \theta)^2}{u(x \with \theta)}\dd x = \sum_{j} \int_{A_j} \frac{u_\theta(x \with \theta)^2}{u(x \with \theta)}\dd x,
\end{equation}
and noting this must hold for any $A_j$, this further reduces to showing
\begin{equation}
 \frac{\left( \int_{A_j} u_\theta(x \with \theta) \dd x \right)^2}{\int_{A_j} u(x \with \theta) \dd x} \leq \int_{A_j} \frac{u_\theta(x\with \theta)^2}{u(x \with \theta)}\dd x. \label{eq:inequal}
\end{equation}
Denote the inner product $\langle f, g \rangle_u \coloneqq \int  f(x) g(x) u(x)\dd x$. Then the assertion \eqref{eq:inequal} can be written  
\begin{equation}
\frac{\langle \partial_\theta \log u,1 \rangle_u^2  }{\langle 1, 1\rangle_u} \leq \langle \partial_\theta \log u, \partial_\theta \log u \rangle_u. \label{eq:inequal_CS}
\end{equation}
In this form \eqref{eq:inequal} is seen to be an expression of the Cauchy-Schwarz inequality, %
and the claim $I_{\mathrm{bin}} \leq I_{\mathrm{unbin}}$ is proved. \end{proof}

Cauchy-Schwarz provides an interpretation of this result. Equality is achieved when $\partial_\theta \log u \propto \mathrm{constant}$, or when $\partial_\theta u \propto u$. Intuitively, this says that scale parameters  (such as $\lambda$ in the main text) are insensitive to binning, but parameters that modulate the spatial variation of the intensity magnify the information loss from binning. %

\section{Discussion}

\label{sect:discuss}

In summary, we have established new understanding in the theory and practice of inferring dynamics of spatially stochastic birth-death-diffusion processes in heterogeneous domains from particle snapshot data alone. This pursuit was motivated by technical challenges arising in the study of spatial transcriptomics imaging data, where observable data often consists of positions of a stochastic population of RNA molecules, at a single time (due to cell fixation), in a cell-specific geometry. Using an argument based on the occupation measure of particles, the principal result is a rigorous statistical connection between particle paths and their observation as a Poisson point process with intensity governed by a  birth-death-diffusion PDE with appropriate boundaries. This explicit statistical description enables inference that faithfully incorporates often-ignored cell-to-cell heterogeneities. Surprisingly, we find that variation in source location and domain size can increase the accuracy of inference on the underlying dynamics in the regime where particles are long-lived relative to their typical diffusive exit time from the domain. Our study has broader context and future directions in the areas of inverse problems and mathematical modeling of gene expression data.

As an inverse-problem, the most important aspect of the work is the formal derivation and justification for the point process likelihood. This non-trivial likelihood must describe a stochastic number of particles and their position, in contrast to previous inference studies with single particle paths \cite{bernstein2016analysis,falcao2020diffusion} or population positions with fixed number \cite{chen_solving_2021}. Our work clarifies how point processes are the natural resolution of this technical hurdle. Moreover, we show that the point process description encodes more statistical information than spatial binning used in other inference of stochastic reaction-diffusion systems \cite{dewar2010parameter}. We are far from the first to consider mechanistic models in partnership with spatial point processes in biological systems, e.g.,  \cite{plank2015spatial,ovaskainen2014general}. The distinction of these studies in the ecological literature is subtle but important: in our work, the point process is an emergent statistical description from a birth-death-diffusion process rather than an assumed statistical model for observations. Although the PDE-constraint on the intensity bares resemblance to classical PDE inverse problems \cite{aster2018parameter,flegg2020parameter}, we emphasize the key difference in the observation noise. Rather than observations of the PDE solution (often boundary data) plus additive noise, our inverse problem corresponds to the observations of stochastic particle positions that serve as a proxy for the PDE solution. We provide a preliminary investigation of the statistical properties of this inverse problem, but there are several avenues of promising investigation in discerning similarities and differences in this PDE inverse problem against their classical counterpart. 

On the computational side, this connection to other inverse problems suggests there are are likely improvements toward more efficient inference. Such improvements will likely be necessary for large-scale inference on now available tissue-scale data \cite{eng_transcriptome-scale_2019}. The primary bottleneck of our approach is each evaluation of the likelihood requires solving the PDE model over all observed domains. The simplicity of the 1D model allowed for an analytical solution to the PDE and straightforward MCMC sampling. However, these luxuries break down quickly for more realistic investigations. Therefore, one may be able to construct improved samplers seen in other gene model inference\cite{kilic2023monte}, or leverage evaluations of the likelihood in approximate techniques such as variational Bayesian methods \cite{bleiVariationalInferenceReview2017,NEURIPS2018_747c1bcc}. However, these approaches still suffer the bottleneck of each likelihood evaluation requiring the solution to the model over all observed domains. An alternative avenue to circumvent this may be the use of neural network or related approaches. Such techniques have served useful in inverse problems on stochastic chemical reactions \cite{jiang_neural_2021}, classical PDE inverse problems \cite{raissi2019physics}, and particle positions \cite{chen_solving_2021}, including with domain generalization \cite{Cao2023.02.28.530379}. Therefore, this seems like a promising avenue of study for this problem at the intersection of these topics. Lastly, a variety of investigations remain in adapting the zoo of techniques in the toolbelt for inverse problems to this setup, ranging from model selection \cite{nardini2021learning} to optimal experimental design \cite{ryan2016review}. 

The steady-state birth-death-diffusion model was chosen at a deliberate level of complexity to distill the salient features of the data: snapshots of fluctuating populations in a domain. However, there is considerable  work that must be done in incorporating known biological complexities before our framework can be used on experimental smFISH data. We acknowledge the context of a vast literature on forward modeling of various non-spatial aspects of stochastic gene expression e.g., \cite{paulsson2005models,singh2013quantifying,cao2020analytical,GorinPRE2020,karamched2023stochastic}. The use of these stochastic models in inference for transcriptomics data is also increasingly prevalent and sophisticated \cite{herbach2017inferring,gomez2017bayfish,luo2022bisc}. We hope that our investigate serves as a segue for the field to transition into spatially resolved models that can be linked to this already available data. A non-exhaustive list of interesting complexities that may be included in our model are as follows. The production of RNA for many genes is known to be bursty \cite{jones2018bursting,donovan2019live}, with short stochastic intermittent windows of production. A spatial bursting model may be tractable with generating function techniques  \cite{cottrell_stochastic_2012} that serve fruitful for analytical understanding of non-spatial models \cite{GorinPRE2020,gorin2022modeling}.
 On the geometric side, the model must be extended to biological datasets of 2D and 3D FISH imaging \cite{omerzuThreedimensionalAnalysisSingle2019}. The models can also be extended to multiple compartments (the nucleus, cytoplasm) and serve as a platform for investigating the nature of RNA export through different possible boundary conditions, e.g., Robin boundary conditions arising through stochastic gating of particles \cite{lawley2015new} that may approximate the complex dynamics of RNA export \cite{kohler2007exporting,azimi_agent-based_2014}. Lastly, the movement model could be extended to include subdiffusive motion seen \cite{cabal2006saga}, known to profoundly dictate spatial patterns, especially near boundaries \cite{holmes2019subdiffusive}. 

In summary, our work serves as a basis for more detailed investigations of inferring dynamics from spatial dynamics from snapshots of stochastic particle populations, especially those arising from spatial transcriptomics data. The pursuit of understanding this biological data could only be accomplished by distilling it to its essential components and constructing that reflects these complexities. The analytical tractability of the model allowed for us to carefully examine the role of cell-to-cell heterogeneities which are surely present in a wide array of biological datasets. This surprising insight of robustness in inference arising from noise contributes a new dimension to the broader theme in mathematical biology of how noise can sharpen signals \cite{hanggi2002stochastic}. In an era where phenomenological statistical analysis of biological data is commonplace, we hope that our study will inspire others to consider rigorous modeling of biological data to discover other surprises and insights.

\section{Acknowledgements}

FD acknowledges NIH support 1DP2GM149554. The work of RBL and SAM was supported by the Laboratory Directed Research and Development program at Sandia National Laboratories, a multimission laboratory managed and operated by National Technology and Engineering Solutions of Sandia LLC, a wholly owned subsidiary of Honeywell International Inc. for the U.S. Department of Energy's National Nuclear Security Administration under contract DE-NA0003525.

\newpage

\appendix

\section{Numerical implementation details}
\label{sect:numerics}
\subsection{Stochastic simulation algorithm for synthetic data}

\cref{alg:particleSim} describes the generation procedure for synthetic data used in the inference setups of \cref{fig:inferencefig,fig:random_areas}. Importantly, we simulate trajectories of the particle process, and do inference using the point process theory from \cref{thm:main}. The simulation procedure is exact except for a small $\Delta_t$-controlled approximation error in computing whether a Brownian path has exited the domain \cite{smith2019spatial}. In all simulations, we take $\Delta_t=10^{-5}$, several orders of magnitude smaller than any other timescale. 

\begin{algorithm}
\caption{Stochastic simulation to generate particle positions for synthetic data.}\label{alg:particleSim}
\begin{algorithmic}[1]
\Procedure{SimulateParticles}{$k_\text{birth}, k_\text{death}, D, z, L, T, \Delta_t$}
\State $N\text{births} \sim \operatorname{Poisson}(\ksource T)$
\For{$n=1,\ldots,N{\text{births}}$}
\State $T\text{birth}_n \sim \operatorname{unif}(0,T)$
\State $\tau\text{decay}_n \sim \operatorname{exp}(1/\kswitch)$
\State $T\text{decay}_n \gets T\text{birth}_n+\tau\text{decay}_n$
\If{$T\text{decay}_n>T$}\Comment{only simulate if particle is possibly alive $T$}
\State $x\gets z$, \, $t_n\gets T\text{birth}_n$  \Comment{birth particle at source location} 
\State $\text{survived}_n\gets 1$
\While{$t_n \leq T$}
\State $x\gets x+  \sqrt{2D\Delta_t} \cdot \texttt{randn}$ \Comment{Brownian step}
\State $t_n\gets t_n+\Delta_t$
\If{$x<0$\text{ or }$x>L$}   \Comment{if exits domain}
\State  $\text{survived}_n\gets 0$ \textbf{break} \Comment{instantly absorbed, stop simulating}  \EndIf
\EndWhile
\If{$\text{survived}_n$} \Comment{if alive at observation $T$}
\State \textbf{append} $x$ \Comment{store final position}
\EndIf
\EndIf
\EndFor
\State \textbf{return} $x_1,\ldots,x_N$
\EndProcedure
\end{algorithmic}
\end{algorithm}

\subsection{Metropolis-in-Gibbs sampling for Bayesian inference}

See \cref{subsect:inference} for an explanation of prior choices for $\lambda, \mu$. The choice of prior for $\lambda$ provides a (marginal) conjugate posterior that can be computed analytically, so a Gibbs-in-Metropolis MCMC approach is used and described in \cref{alg:MCMCalg}.

\begin{algorithm}
\caption{MCMC Gibbs-in-Metropolis sampling of posterior distribution.}\label{alg:MCMCalg}
\begin{algorithmic}[1]
\Procedure{PosteriorSampMCMC}{$X,\sigma,\alpha_\lambda,\beta_\lambda, \mu_\mathrm{max}$}
\State \textbf{init} $\lambda_0\sim\operatorname{Gamma}(\alpha_\lambda,\beta_\lambda)$, $\mu_0\sim \operatorname{unif}(0,\mu_\mathrm{max})$
\For{$s=1\ldots,S$}
\LComment{MH step for $\mu$ first}
\State $\mu_* \gets \mu + \sigma \cdot \texttt{randn}$
\While{$\mu_*<0$ \text { or } $\mu_*>\mu_\text{max}$}
 $\mu_* \gets \mu + \sigma \cdot \texttt{randn}$ \Comment{only accept proposals for $\mu$ in range of prior}
\EndWhile
\State $\pi_* \gets \ell( \lambda_{s-1},\mu_*,X)$, $\pi_{s-1} \gets \ell(\lambda_{s-1},\mu_{s-1}, X)$   \Comment{proposal \& current log-likelihood}

\State $p \gets \min\{\exp(\pi_*-\pi_{s-1}),1\}$
\If{$\texttt{rand}<p$} 
$\mu_s \gets \mu_*$ \Comment{accept proposal}
\Else \, 
$\mu_s \gets \mu_{s-1}$   \Comment{keep current}
\EndIf
\LComment{Gibbs sample $\lambda$ using Gamma conjugacy}
\State  $\alpha,\beta$ $\gets$ using prior $\alpha_\lambda,\beta_\lambda$, data $X$, and $\mu_s$ in \eqref{eq:gamma_conj_update}  \Comment{Gibbs step uses current value of $\mu$}
\State $\lambda_s \sim \operatorname{Gamma}(\alpha,\beta)$ 
\EndFor
\State \textbf{return} $\lambda_0,\ldots,\lambda_S,\quad \mu_0,\ldots,\mu_S$
\EndProcedure
\end{algorithmic}
\end{algorithm}

\begin{figure}[htb]
    \centering
    \includegraphics[width=0.6\textwidth]{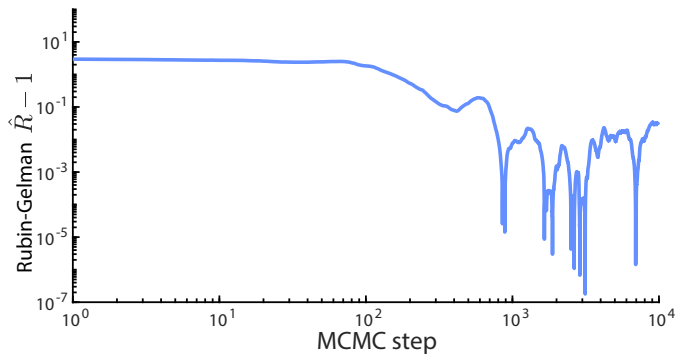}
    \caption{Rubin-Gelman $\hat{R}$-1 demonstrating convergence of MCMC (on the setup in Fig \ref{fig:inferencefig} with $M=10$) for two chains. Convergence is considered good when $\hat{R}-1<10^{-1}$,  achieved in approximately $10^3$ MCMC steps. }
    \label{fig:rhat}
\end{figure}

As is standard practice \cite{gelman1995bayesian}, we do sequence of warm-up phases to target the range of acceptance probabilities between $0.3$ and $0.5$, starting with $\sigma=1$ and rounds of 500 steps. After each warm-up round, if the acceptance frequency is below the target range, $\sigma \to \sigma/2$, and if above $\sigma \to 2\sigma$. Once in the range, the ending values of $\lambda,\mu$ are used as initial values and the full algorithm is run with this step size $\sigma$ for much greater number of steps. The values from the chains for this burn-in phase are discarded.  

To choose the number of MCMC steps, we investigate the Gelman-Rubin diagnostic $\hat{R}$ \cite{gelman1995bayesian} for the presumed worst-case scenario of $M=10$ snapshots in the setup of \cref{fig:inferencefig}. For $M=10$, we initialize multiple chains and compute $\hat{R}$ as a function of the number of MCMC steps, shown in \cref{fig:rhat}. Convergence is considered satisfactory for $\hat{R}<1.1$ \cite{gelman1995bayesian}, which is achieved around $10^3$ MCMC steps. We take $10^5$ MCMC steps for all setups in the main text and do not monitor convergence further.

\clearpage

\clearpage

\printbibliography

\end{document}